\documentclass{amsart}

\usepackage[linesnumbered,ruled,vlined]{algorithm2e}
\usepackage{algpseudocode}

\SetKwInput{KwInput}{Input}
\SetKwInput{KwOutput}{Output}

\usepackage[american]{babel}

\usepackage[normalem]{ulem} 

\usepackage[letterpaper,top=2cm,bottom=2cm,left=3cm,right=3cm,marginparwidth=1.75cm]{geometry}

\usepackage{comment}

\usepackage{enumerate}

\usepackage{amsmath}
\usepackage{amsthm}
\usepackage{graphicx}
\usepackage[colorlinks=true, allcolors=red]{hyperref}
\usepackage[numbers,sort]{natbib}

\usepackage[american]{babel}	
\usepackage{csquotes}	
\usepackage{graphicx}			
\usepackage{epstopdf}
\usepackage{subfig}				
\usepackage{marvosym}		
\usepackage{amssymb}
\usepackage{mathtools}
\usepackage{xcolor}

\usepackage{tikz,fullpage}
\usetikzlibrary{arrows,%
        petri,%
        topaths,%
        positioning,%
        shapes.geometric}
\usepackage{tkz-berge}

\newtheorem{theorem}{Theorem}
\newtheorem{lemma}[theorem]{Lemma}
\newtheorem{proposition}[theorem]{Proposition}
\newtheorem{corollary}[theorem]{Corollary}

\theoremstyle{definition}
\newtheorem{definition}[theorem]{Definition}

\newtheorem{example}[theorem]{Example}

\title[Enumerating monophyletic characters]{Enumerating monophyletic characters in mathematical phylogenetics}

\author[Fischer]{Mareike Fischer}
\address{Institute of Mathematics and Computer Science, Greifswald University, Greifswald, Germany} \email{mareike.fischer@uni-greifswald.de, email@mareikefischer.de}

\author[Hendriksen]{Michael Hendriksen}
\address{School of Mathematics and Statistics, University of New South Wales, Sydney, NSW, Australia}
\email{m.hendriksen@unsw.edu.au}

\author[Wicke]{Kristina Wicke}
\address{Department of Mathematical Sciences, New Jersey Institute of Technology, Newark, NJ, USA and National Institute for Theory and Mathematics in Biology, Northwestern University and The University of Chicago, Chicago, IL, USA}
\email{kristina.wicke@njit.edu}

\definecolor{changecolor}{RGB}{192,64,0}

\usepackage[normalem]{ulem}
 
\begin{document}

\begin{abstract} 
Grouping species according to their phylogenetic relationships often results in different groups than grouping them according to their shared traits. Monophyletic groups play an important role in this regard, as they are groups of species sharing the same trait and being uniquely defined by a joint phylogenetic subtree. This immediately leads to the question of how to identify possible monophyletic groups in characters, which assign each present-day species a certain trait and which are typically used for phylogenetic tree reconstruction.

In our manuscript, we provide a general formula to quantify how many different characters are monophyletic on any given tree and provide simple formulae for binary characters and for certain tree shapes. We also investigate relations between monophyly and the well-known phylogenetic tree reconstruction criterion maximum parsimony by providing a linear-time algorithm which determines the parsimony score together with the monophyly type of a character on a tree.
 \smallskip \newline
\noindent \textbf{Keywords.} 
monophyly, phylogenetic tree, characters, maximum parsimony
\smallskip \newline
\noindent \textbf{MSC identifiers.} 05A15, 05C05, 92B05
\end{abstract}

\maketitle

\section{Introduction}
Investigating the relationships between different species is one of the main aims in evolutionary biology and mathematical phylogenetics. One way this investigation can be done is by analyzing \emph{phylogenetic trees}, which are often used to depict evolutionary relationships and which can be reconstructed based on aligned sequence data such as DNA, RNA, or protein sequences~\cite{felsenstein2003inferring,Semple2003}. Another way to understand the relationships between species, however, is the investigation of certain \emph{traits} -- i.e., of shared characteristics among different species. Interestingly, these two viewpoints do not always lead to the same outcome: for instance, it is possible that a trait is shared between two species which are not closely related according to the underlying phylogenetic tree (this happens, for instance, in the case of convergent evolution \cite{speed}), whereas two closely related species might not share the same trait (which happens, for instance, in case of gene loss \cite{Losos2011}). Such discrepancies arise because the evolution of traits may involve convergence, reversals, or uncertainty in ancestral states. As an example, although morphologically defined mammalian and molluscan genera have been shown to be significantly more likely to be monophyletic with respect to molecular phylogenies than expected by chance \cite{Jablonski2009}, the correspondence between trait-based classifications and phylogenetic relationships is not perfect, motivating a systematic study of when observed molecular data admit monophyletic interpretations. It is the main aim of this manuscript to investigate and enumerate settings in which aligned sequence data leads to monophyletic groups of species.

Roughly speaking, a group of species is called \emph{monophyletic} concerning a certain trait if it has a common ancestor and if it contains precisely those species (from the present and the past) sharing this trait. In a phylogenetic tree, a monophyletic group can be recognized by the presence of a vertex all of whose descendants share the trait under investigation, whereas this trait cannot be found anywhere else in the tree. As an example, considering tree $T$ and the traits assigned to its leaves in the row labeled by $f_1$ in Figure~\ref{fig_monotypes}, species 1 and 2 share trait $\alpha$ and form a monophyletic group. This can be seen by assigning trait $\alpha$ to vertex $u$ and assigning a different trait to all other inner vertices.

Aligned sequence data, which is often used to reconstruct trees, typically only contains present-day data. For instance, a DNA alignment usually contains DNA data of species living today. The tree reconstructed from these data contains information about common ancestry, but even in cases where this tree is known, there might be some uncertainty concerning the traits of (typically extinct) ancestors, for which usually no sequence data are available. For instance, if two sister species have different traits, certain phylogenetic reconstruction criteria like maximum parsimony (MP) \cite{Fitch,Hartigan1973} might consider both traits as equally likely for the common ancestor. 

Previous work on the concept of monophyly in phylogenetics includes studies of monophyly under neutral models of evolution (e.g., \cite{Zhu2011,Mehta2016}), and, very recently, the use of monophyletic groups in partitioning phylogenetic trees into clusters~\cite{Ganesan2025}. However, to the best of our knowledge, no previous work has systematically enumerated character patterns according to whether they admit monophyletic realizations on a fixed phylogenetic tree. Such an enumeration provides a better understanding of how restrictive the requirement of monophyly is for molecular characters on a given tree, thereby establishing a combinatorial baseline to which observed data can be compared. This combinatorial perspective is the focus of the present manuscript.

We investigate so-called \emph{characters} (also often referred to as \emph{sites}), which are basically columns in sequence alignments assigning a trait to each species under investigation. We call a character \emph{monophyletic} on a given tree $T$ if it permits a trait assignment to the internal vertices of tree $T$ such that at least one present-day species group (which may consist of only one species) is monophyletic. 
Moreover, we call it \emph{fully monophyletic} on $T$ if every present-day species belongs to a monophyletic group. 
For instance, considering again tree $T$ from Figure \ref{fig_monotypes}, character $f_1=\alpha \alpha \beta \gamma \beta \gamma$, which assigns trait $\alpha$ to species 1 and 2, trait $\beta$ to species 3 and 5, and trait $\gamma$ to species 4 and 6, is monophyletic on $T$, with species 1 and 2 forming a monophyletic group. 
However, $f_1$ is not fully monophyletic, as there is no way of choosing states for all internal vertices to make species with trait $\beta$ (namely 3 and 5) or trait $\gamma$ (namely 4 and 6) members of a monophyletic group.  Character $f_2=\alpha \alpha \beta \gamma \gamma \gamma$, on the other hand, is fully monophyletic on $T$, which can be seen when assigning trait $\alpha$ to $u$ and trait $\gamma$ to $v$ and $\beta$ to all other inner vertices, as then all groups of at least two present-day species with the same trait, i.e., the groups $\{1,2\}$ (trait $\alpha$) as well as $\{4,5,6\}$ (trait $\gamma$) are monophyletic.

Note that (fully) monophyletic characters that induce only singleton monophyletic groups provide no information about shared ancestry among multiple taxa. In particular, such characters do not impose meaningful constraints on ancestral state assignments or tree structure, as any singleton state can trivially be realized on any tree.
For these reasons, we say that a monophyletic character is \emph{relevant} if it induces at least one relevant monophyletic group, whereas a fully monophyletic character is relevant if all its monophyletic groups are relevant.
For example, the monophyletic character $f_1$ in Figure~\ref{fig_monotypes} is relevant as it induces a relevant monophyletic group, namely the group $\{1,2\}$. The fully monophyletic character $f_2$, on the other hand, is not relevant. While the monophyletic groups $\{1,2\}$ and $\{4,5,6\}$ have sizes two and three, respectively, the remaining monophyletic group, group $\{3\}$, only has size one.
\begin{figure}
    \centering
    \begin{tikzpicture}[
    sdot/.style={circle,fill,radius=1pt,inner sep=1pt},
    thick,scale=0.85, transform shape]

\tikzset{
  char/.style={
    font=\large,
    text height=1.5ex,
    text depth=0.25ex
  }
}

\node at (1,4){\LARGE $T$};

\node[sdot] (1) at (0,0) {}; 
\node[sdot] (2) at (2,0) {}; 
\node[sdot] (3) at (4,0) {}; 
\node[sdot] (4) at (6,0) {}; 
\node[sdot] (5) at (8,0) {}; 
\node[sdot] (6) at (10,0) {}; 

\node[sdot] (A) at (1,1) {}; 
\node[sdot] (B) at (9,1) {}; 
\node[sdot] (C) at (2,2) {}; 
\node[sdot] (D) at (8,2) {}; 
\node[sdot] (E) at (5,5) {}; 

\draw (1)--(E)--(6);
\draw (A)--(2);
\draw (B)--(5);
\draw (C)--(3);
\draw (D)--(4);

\node[below=2pt of 1] {1};
\node[below=2pt of 2] {2};
\node[below=2pt of 3] {3};
\node[below=2pt of 4] {4};
\node[below=2pt of 5] {5};
\node[below=2pt of 6] {6};

\node[left=2pt of A] {$u$};
\node[right=2pt of D] {$v$};

\node[below=20pt of 1, char] (f1) {$\alpha$};
\node[below=20pt of 2, char] {$\alpha$};
\node[below=20pt of 3, char] {$\beta$};
\node[below=20pt of 4, char] {$\gamma$};
\node[below=20pt of 5, char] {$\beta$};
\node[below=20pt of 6, char] {$\gamma$};
\node[left=10pt of f1]{\large $f_1$:};

\node[below=35pt of 1, char] (f2) {$\alpha$};
\node[below=35pt of 2, char] {$\alpha$};
\node[below=35pt of 3, char] {$\beta$};
\node[below=35pt of 4, char] {$\gamma$};
\node[below=35pt of 5, char] {$\gamma$};
\node[below=35pt of 6, char] {$\gamma$};
\node[left=10pt of f2]{\large $f_2$:};

\node[below=50pt of 1, char] (f3) {$\alpha$};
\node[below=50pt of 2, char] {$\beta$};
\node[below=50pt of 3, char] {$\alpha$};
\node[below=50pt of 4, char] {$\beta$};
\node[below=50pt of 5, char] {$\alpha$};
\node[below=50pt of 6, char] {$\beta$};
\node[left=10pt of f3]{\large $f_3$:};
\end{tikzpicture}
\caption{Different types of monophyly: Character $f_1=\alpha \alpha \beta \gamma \beta \gamma$ is monophyletic on $T$ but not fully monophyletic. It is relevant as it induces a relevant monophyletic group, namely $\{1,2\}$. Similarly, character $f_2=\alpha \alpha \beta \gamma \gamma \gamma$ is relevant monophyletic. Character $f_2$ is also fully monophyletic but not relevant fully monophyletic (as the monophyletic group of species with trait $\beta$, namely $\{3\}$, has size one). Finally, character $f_3=\alpha \beta \alpha \beta \alpha \beta$ has none of these properties.}
\label{fig_monotypes}
\end{figure}

Using these definitions, we will quantify how many characters have the property of being (relevant) monophyletic and (relevant) fully monophyletic, respectively. Interestingly, these counts do not only depend on the number of species under investigation, but also on some structural properties of the underlying phylogenetic tree. Moreover, the number of such characters turns out to strongly depend on how many different traits are present in the tree. We additionally show some interesting links between the different types of monophyletic characters and the famous maximum parsimony tree reconstruction criterion as well as Fitch's algorithm for ancestral state reconstruction \cite{Fitch,Hartigan1973}.

The remainder of this manuscript is organized as follows. In Section \ref{s:defbas}, we introduce definitions and basic concepts for phylogenetic trees, as well as define four different types of monophyly. In Section \ref{s:genform}, we introduce a general formula for counting two types of monophyly, and derive some general results. In Section \ref{s:treeshape}, we explicitly calculate the four monophyly numbers for the special cases of caterpillar and fully balanced trees. In Section \ref{s:bincha}, we fully characterise the four monophyly numbers for binary characters. In Section \ref{s:expmon}, we calculate the expected monophyly number for binary characters for two of the types of monophyly under two different phylogenetic models -- the uniform model \cite{Semple2003} and the Yule model \cite{yule}. In Section \ref{s:pars}, we link monophyletic characters and the famous maximum parsimony tree reconstruction criterion as well as Fitch's algorithm for ancestral state reconstruction by providing an algorithm determining the parsimony score together with the monophyly type of a character in linear time. Finally, we end with a discussion and future research possibilities for monophyletic numbers.

\section{Definitions and basic concepts}\label{s:defbas}

Before we can state our results, we first introduce some basic definitions and notions, following standard phylogenetic notation as in \cite{Semple2003, Steel2016,fischer_tree_2023}.

\subsection*{Phylogenetic trees and tree shapes} First, we need to describe the structures and objects we are interested in. We start with trees. A \emph{phylogenetic $X$-tree} is a connected and acyclic graph $T=(V,E)$ whose leaf set $V^1(T)$ (i.e., vertices of degree at most 1) is bijectively labeled by some label set $X$. All non-leaf vertices of $T$ are summarized in the set $\mathring{V}(T)$ of inner vertices. Throughout our manuscript, we simply assume $X=\{1,\ldots,n\}$. A phylogenetic $X$-tree $T$ is called \emph{rooted} if it has a designated root vertex $\rho_T$. A rooted phylogenetic $X$-tree $T$ with $|X|=n\geq 2$ is called \emph{binary} if all inner vertices have degree 3, except the root $\rho_T$, which has degree 2. Such a tree $T$ can be decomposed into its two \emph{maximal pendant subtrees} (a procedure often coined \emph{standard decomposition}), which are the two subtrees $T_1$ and $T_2$ rooted at the children of the root of $T$. In this case, $T$ is often denoted as $T=(T_1,T_2)$. More generally, given a rooted tree $T$ and a vertex $v \in V(T)$, we denote by $T_v$ the pendant subtree of $T$ rooted at $v$. Moreover, note that a rooted binary phylogenetic $X$-tree comes with an inherent hierarchy, in which a vertex $u$ is called an \emph{ancestor} of a vertex $v$ if $u$ lies on the unique path from $\rho_T$ to $v$ in $T$. In this case, $v$ is called a \emph{descendant} of $u$. Note in particular that we permit a vertex $u$ to be its own ancestor and descendant. 
Moreover, if an ancestor $u$ of $v$ is adjacent to $v$, $u$ is said to be the \emph{parent} of $v$ and $v$ is the \emph{child} of $u$. A pair of leaves $u,v \in V^1(T)$ that share a common parent is called a \emph{cherry}, denoted by $[u,v]$.

In our manuscript, we only consider rooted binary phylogenetic $X$-trees, and whenever there is no ambiguity, we refer to them simply as \emph{trees}. Whenever, in such a tree, the leaf labels are disregarded, we will refer to the \emph{tree shape}, i.e., a tree shape is the graph theoretic tree underlying a rooted binary phylogenetic $X$-tree. 

In this regard, there are two important tree shapes that we consider in the present manuscript: for $n\geq 2$, the \emph{caterpillar tree} or simply caterpillar $T_n^{cat}$ is the unique tree shape with only one so-called cherry, and for $n=2^h$ (with $h\geq 1$), the \emph{fully balanced tree} $T_h^{fb}$ is the unique tree with $n$ leaves that has height $h$, where the \emph{height} of a rooted tree is the maximum number of edges on any path from the root of $T$ to a leaf of $T$ (all other trees with $n=2^h$ leaves have a larger height than $T_h^{fb}$).

We focus on the caterpillar and fully balanced trees because they are the extremal rooted binary trees with respect to tree balance~\cite{fischer_tree_2023}. The caterpillar is universally regarded as the least balanced rooted binary tree for any given number of leaves, whereas the fully balanced tree (when it exists) is the most balanced. 
Consequently, these two tree shapes will serve as natural representatives of the range of behaviours that can occur with respect to the number of monophyletic characters.

\subsection*{Monophyletic groups and monophyletic characters} Next, in order to study monophylies, we need to introduce the data we analyse on trees in order to identify them. The data we analyse are given in the form of \emph{characters}: A character $f:X\rightarrow C$ is a function assigning each leaf label (or, by the above mentioned bijection: each leaf) of a given tree $T$ a \emph{character state} or \emph{trait} of some set $C$. If $|C|=2$, i.e., if $f$ employs up to two traits, $f$ is called \emph{binary}.\footnote{Note that in some contexts, for instance whenever bipartitions of the leaf set induced by edges of a phylogenetic tree are considered, characters called binary usually employ precisely two states (see, for instance, \cite{fischer2016,fischer2019}). In our setting, however, a constant character employing only one state would also be considered binary.} 

We will also define a notation for characters. Recall that $X = \{1,\dots, n\}$, and fix some character $f$ for which $f(1)=c_1,f(2)=c_2,\dots, f(n)=c_n$, for some $c_1,\dots,c_n \in C$. Then we will write $f$ simply as the sequence $c_1\dots c_n$.

We now consider some fixed tree $T$ and character $f$. Denote the set of leaf descendants of a vertex $v$ by $L(v)$, where if $v$ is a leaf then $L(v) =\{v\}$. More generally, for a set of vertices $S$, let 
\[ L(S) = \bigcup_{v \in S} L(v),\]
i.e., $L(S)$ is the set of all leaf descendants of the vertices in $S$.

We are now in a position to define the main concepts for the present manuscript.

\begin{definition}
    Let $T$ be a tree on $X$, $f$ be a character on $T$, and $Y \subseteq X$. If there exists some vertex $v$ such that $L(v)=Y$, and for all $y \in Y$ we have that $f(y)=a$ for some fixed $a \in C$, then $Y$ is referred to as a \emph{monophyletic group} with respect to $T$ and $f$ (or just a monophyletic group, if $T$ and $f$ are clear from context). If, furthermore, $|Y| \ge 2$, we refer to $Y$ as a \emph{relevant} monophyletic group.
\end{definition}

We now define four types of monophyletic characters.

\begin{definition}
    Let $T$ be a tree on $X$, and $f$ a character on $T$. Then a character $f$ is called \emph{(relevant) monophyletic} if there exists some $Y \subseteq X$ that is a (relevant) monophyletic group with respect to $T$ and $f$.

    Furthermore, if every leaf $\ell \in X$ is a  member of some (relevant) monophyletic group, $f$ is called \emph{(relevant) fully monophyletic}.
\end{definition}

\begin{example}
    Fix $T=T_1$ from Figure \ref{fig:firstexample}, and let $\alpha, \beta, \gamma, \delta \in C$ be four distinct traits. Then the character $\alpha \beta \gamma \delta$ is monophyletic (take, for example, $Y=\{1\}$) and fully monophyletic (as every leaf is part of a singleton monophyletic group -- the set containing exactly the leaf itself), but neither relevant monophyletic nor relevant fully monophyletic, as there are no monophyletic groups of size at least two.

    The character $\alpha \alpha \beta \gamma$ is monophyletic and relevant monophyletic (take, for example, $Y=\{1,2\})$ and is fully monophyletic as again every leaf is part of a singleton monophyletic group. However, it is not relevant fully monophyletic, as not every leaf is part of a monophyletic group of size at least two.

    We consider two final extremal cases. The character $\alpha \beta \alpha \beta$ has none of the four types of monophyly, whereas the character $\alpha \alpha \beta \beta$ possesses all four.
\end{example}

In the following, the number of relevant monophyletic characters on a given tree $T$ using a trait set $C$ with $|C|=c$ will be referred to as the \emph{relevant monophyly number} of $T$ and $c$, and this number will be denoted by $m(T,c)$. The number of relevant fully monophyletic characters on a given tree $T$ using a trait set $C$ with $|C|=c$ will be referred to as the \emph{relevant full monophyly number} of $T$ and $c$, and this number will be denoted by $fm(T,c)$. The analogous numbers in which non-relevant monophyletic characters and non-relevant fully monophyletic characters are allowed are denoted $m_N(T,c)$ and $fm_N(T,c)$, respectively. We summarise this and further notation in Table~\ref{tab:mono-counts}.

\begin{table}[h]
\centering
\caption{Summary of monophyly numbers for a tree $T$ and trait set size $c$.}
\label{tab:mono-counts}
\begin{tabular}{|c|l|p{8cm}|}
\hline
\textbf{Notation} & \textbf{Name} & \textbf{Description} \\
\hline
$m(T,c)$ & relevant monophyly number & Number of characters $f$ on $T$ with $|C|=c$ such that $f$ is relevant monophyletic; i.e., there exists a subset $Y \subseteq X$ that is a relevant monophyletic group with respect to $T$ and $f$. \\
\hline
$fm(T,c)$ & relevant full monophyly number & Number of characters $f$ on $T$ with $|C|=c$ such that $f$ is relevant fully monophyletic; i.e., every leaf $\ell \in X$ belongs to some relevant monophyletic group. \\
\hline
$m_N(T,c)$ & monophyly number (non-relevant) & Number of characters $f$ on $T$ with $|C|=c$ such that $f$ is monophyletic (not necessarily relevant); i.e., there exists a monophyletic group $Y \subseteq X$ with respect to $T$ and $f$. \\
\hline
$fm_N(T,c)$ & full monophyly number (non-relevant) & Number of characters $f$ on $T$ with $|C|=c$ such that $f$ is fully monophyletic (not necessarily relevant); i.e., every leaf belongs to some monophyletic group. \\
\hline
$m_{N,v}(T,c)$ & monophyly count of $v$ (non-relevant)& Number of characters $f$ on $T$ with $|C|=c$ for which vertex $v$ of $T$ induces monophyly on $T$. \\
\hline
$m_v(T,c)$ & relevant monophyly count of $v$ & Number of characters $f$ on $T$ with $|C|=c$ for which vertex $v$ of $T$ induces relevant monophyly on $T$. \\
\hline
\end{tabular}
\end{table}

\color{black}

\begin{figure}
    \centering

\begin{tikzpicture}[
    sdot/.style={circle,fill,radius=1pt,inner sep=1pt},
    thick]

\node[sdot] (1) at (0,0) {}; 
\node[sdot] (2) at (2,0) {}; 
\node[sdot] (3) at (4,0) {}; 
\node[sdot] (4) at (6,0) {}; 

\node[sdot] (A) at (1,1) {}; 
\node[sdot] (B) at (5,1) {}; 
\node[sdot] (C) at (3,3) {}; 

\draw (1)--(A)--(2);
\draw (3)--(B)--(4);
\draw (A)--(C)--(B);

\node[below=2pt of 1] {1};
\node[below=2pt of 2] {2};
\node[below=2pt of 3] {3};
\node[below=2pt of 4] {4};

\node[left=2pt of A] {$v_1$};
\node[left=2pt of B] {$v_2$};
\node[left=2pt of C] {$v_3$};
\end{tikzpicture}
\hspace{1cm}
\begin{tikzpicture}[
    sdot/.style={circle,fill,radius=1pt,inner sep=1pt},
    thick]

\node[sdot] (1) at (0,0) {}; 
\node[sdot] (2) at (2,0) {}; 
\node[sdot] (3) at (4,0) {}; 
\node[sdot] (4) at (6,0) {}; 

\node[sdot] (A) at (1,1) {}; 
\node[sdot] (B) at (2,2) {}; 
\node[sdot] (C) at (3,3) {}; 

\draw (1)--(A)--(2);
\draw (3)--(B)--(A);
\draw (4)--(C)--(B);

\node[below=2pt of 1] {1};
\node[below=2pt of 2] {2};
\node[below=2pt of 3] {3};
\node[below=2pt of 4] {4};

\node[left=2pt of A] {$v_1$};
\node[left=2pt of B] {$v_2$};
\node[left=2pt of C] {$v_3$};
\end{tikzpicture}
    \caption{Two trees $T_1,T_2$ on four leaves, with internal vertices labelled.}
    \label{fig:firstexample}
\end{figure}

\begin{example}
    We now calculate the various monophyly numbers by hand for the trees in Figure \ref{fig:firstexample} and list of traits $C = \{\alpha,\beta\}$, that is, for binary characters on each tree. This is shown in Table \ref{tab:mono.nums}. In particular, the non-relevant monophyly numbers and non-relevant full monophyly numbers for both trees are respectively the same, that is $m_N(T_1,2)=m_N(T_2,2)=12$ and $fm_N(T_1,2)=fm_N(T_2,2)=4$. However for $T_1$ the relevant and relevant full monophyly numbers are both $4$, that is $m(T_1,2)=fm(T_1,2)=4$, but for $T_2$ they are $6$ and $2$, respectively. We note here that we will fully characterise all four monophyly numbers for binary characters in Section \ref{s:bincha}.

    \begin{table}[ht]
\centering
\begin{tabular}{|c|cccc|cccc|}
\hline
&
\multicolumn{4}{c|}{\textbf{$T_1$}} &
\multicolumn{4}{c|}{\textbf{$T_2$}} \\
\hline
\textbf{$f$} &
\textbf{M} & \textbf{RM} & \textbf{F} & \textbf{RF} &
\textbf{M} & \textbf{RM} & \textbf{F} & \textbf{RF} \\
\hline
$\alpha \alpha \alpha \alpha$ & \checkmark  & \checkmark  & \checkmark  & \checkmark  & \checkmark  & \checkmark  & \checkmark  & \checkmark  \\
$\alpha \alpha \alpha \beta$  & \checkmark  & $\times$ & $\times$  & $\times$  & \checkmark & \checkmark & \checkmark  & $\times$  \\
$\alpha \alpha \beta \alpha$  & \checkmark  & $\times$  & $\times$  & $\times$  & \checkmark & $\times$  & $\times$  & $\times$  \\
$\alpha \alpha \beta \beta$  & \checkmark  & \checkmark  & \checkmark  & \checkmark & \checkmark  & \checkmark  & $\times$  & $\times$  \\
$\alpha \beta \alpha \alpha$ & \checkmark  & $\times$  & $\times$  & $\times$  & \checkmark & $\times$  & $\times$  & $\times$  \\
$\alpha \beta \alpha \beta$ & $\times$  & $\times$  & $\times$  & $\times$  & $\times$  & $\times$  & $\times$  & $\times$  \\
$\alpha \beta \beta \alpha$ & $\times$  & $\times$  & $\times$  & $\times$  & $\times$  & $\times$  & $\times$  & $\times$  \\
$\alpha \beta \beta \beta$ & \checkmark & $\times$ & $\times$ & $\times$ & \checkmark & $\times$ & $\times$ & $\times$ \\
$\beta \alpha \alpha \alpha$& \checkmark & $\times$ & $\times$ & $\times$ & \checkmark & $\times$ & $\times$ & $\times$ \\
$\beta \alpha \alpha \beta$ & $\times$  & $\times$  & $\times$  & $\times$  & $\times$  & $\times$  & $\times$  & $\times$  \\
$\beta \alpha \beta \alpha$ & $\times$  & $\times$  & $\times$  & $\times$  & $\times$  & $\times$  & $\times$  & $\times$  \\
$\beta \alpha \beta \beta$ & \checkmark  & $\times$  & $\times$  & $\times$  & \checkmark & $\times$  & $\times$  & $\times$  \\
$\beta \beta \alpha \alpha$ & \checkmark & \checkmark & \checkmark & \checkmark & \checkmark & \checkmark & $\times$  & $\times$\\
$\beta \beta \alpha \beta$ & \checkmark  & $\times$  & $\times$  & $\times$  & \checkmark & $\times$  & $\times$  & $\times$  \\
$\beta \beta \beta \alpha$ & \checkmark& $\times$  & $\times$  & $\times$ & \checkmark & \checkmark & \checkmark & $\times$ \\
$\beta \beta \beta \beta$ & \checkmark & \checkmark & \checkmark & \checkmark & \checkmark & \checkmark & \checkmark & \checkmark \\
\hline
\textbf{Total} & 12 & 4 & 4 & 4 & 12 & 6 & 4 & 2 \\
\hline
\end{tabular}
\caption{A table indicating which binary characters on $T_1$ and $T_2$ from Figure \ref{fig:firstexample} are (M)onophyletic, (R)elevant (M)onophyletic, (F)ully monophyletic and (R)elevant (F)ully monophyletic. The totals indicated in the last row correspond to the various monophyly numbers detailed in Table~\ref{tab:mono.nums}. In particular, $m_N(T_1,2)=12$, whereas $m(T_1,2)=fm_N(T_1,2)=fm(T_1,2)=4$. Moreover, $m_N(T_2,2)=12$, $m(T_2,2)=6$, $fm_N(T_2,2)=4$, and $fm(T_2,2)=2$.
}
\label{tab:mono.nums}
\end{table}
\end{example}

\section{A general formula for monophyletic numbers, and some general results}\label{s:genform}
In this section, we derive formulae for the relevant and non-relevant monophyly numbers of a tree $T$ by relating these numbers to counts of certain antichains of vertices in $T$. Although this approach does not appear to extend to the case of relevant or non-relevant full monophyly numbers, we will obtain results for specific cases of these quantities in later sections.

We begin by recalling the standard notion of an antichain in the partial order induced by the ancestor relation on a rooted tree. 

\begin{definition}
    Let $A=\{v_1,\dots,v_k\}$ be a non-empty set of vertices of $T$, such  that, for every distinct $v_i,v_j\in A$, neither $v_i$ is an ancestor of $v_j$ nor $v_j$ is an ancestor of $v_i$.
    Then $A$ is called an \emph{antichain}. If no vertex in $A$ is a leaf, then $A$ is called a \emph{relevant antichain}.
    Furthermore, for an antichain $A$, define $s(A) \coloneqq \vert L(A) \vert$.
\end{definition}

Notice that since the vertices of an antichain are pairwise incomparable under the ancestor relation, the sets $\{L(v):v\in A\}$ are pairwise disjoint, and hence
\[s(A)=|L(A)|=\sum_{v\in A}|L(v)|.\]

We next define the concept of \emph{inducing monophyly}.
\begin{definition}
    Let $T$ be a phylogenetic tree on $n$ leaves, $C$ be a set of traits, $v$ be a vertex of $T$ and $f$ be a character on $T$. Then we say $v$ \emph{induces monophyly on $T$ for $f$} if there is some trait $a \in C$ such that a leaf $\ell$ is assigned trait $a$ by $f$ if and only if $\ell \in L(v)$. If, additionally, $|L(v)| \geq 2$, we say that $v$ \emph{induces relevant monophyly on $T$ for $f$.} Moreover, if $S$ is a set of vertices such that every vertex in $S$ simultaneously induces (relevant) monophyly on $T$ for $f$, then we similarly say that $S$ induces (relevant) monophyly on $T$ for $f$. Finally, if $S$ induces (relevant) monophyly and $f$ is fully monophyletic, we also say that $S$ induces (relevant) full monophyly on $T$ for $f$.
\end{definition}

Let the number of monophyletic characters for which some vertex $v$ induces monophyly and relevant monophyly on $T$ be denoted $m_{N,v}(T,c)$ and $m_v(T,c)$ respectively. We also refer to $m_v(T,c)$ and $m_{N,v}(T,c)$ as the \emph{relevant monophyly count of $v$} and the \emph{monophyly count of $v$}, respectively. As an example, consider vertex $v_2$ of tree $T_2$ in Figure~\ref{fig:firstexample} together with the binary characters listed in Table~\ref{tab:mono.nums}. The characters for which $v_2$ induces monophyly on $T_2$ are $\alpha \alpha \alpha \beta$ and $\beta \beta \beta \alpha$. Both characters are relevant monophyletic and hence, $m_{N,v_2}(T_2,2)=m_{v_2}(T_2,2)=2$. As a second example, consider leaf 1 of $T_2$. Leaf 1 induces monophyly on $T_2$ for the characters $\alpha \beta \beta \beta$ and $\beta \alpha \alpha \alpha$, both of which are not relevant. Thus, $m_{N,1}(T_2,2)=2$, whereas $m_{1}(T_2,2)=0$.

\begin{lemma}
\label{l:single.v.count}
    Let $v$ be an internal vertex or leaf of some phylogenetic tree $T$, and $C$ a set of traits for which $|C|=c$. Then $m_v(T,c)= m_{N,v}(T,c) = c \cdot (c-1)^{n-|L(v)|} $.
\end{lemma}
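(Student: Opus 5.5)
The plan is a direct counting argument: fix $v$, write down exactly which characters $f$ have $v$ inducing monophyly, and count them by choosing the trait on $L(v)$ first and then the traits on the remaining leaves. Write $k=|L(v)|$, so that $X\setminus L(v)$ has $n-k$ leaves.

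\textbf{Step 1 (characterisation).} By the definition of inducing monophyly, $v$ induces monophyly on $T$ for $f$ exactly when some trait $a\in C$ satisfies two conditions simultaneously:
\begin{itemize}
\item[(i)] $f(\ell)=a$ for every $\ell\in L(v)$;
\item[(ii)] $f(\ell)\neq a$ for every $\ell\in X\setminus L(v)$.
\end{itemize}
Because $L(v)\neq\emptyset$, condition (i) determines $a$ uniquely from $f$, namely $a=f(\ell)$ for any $\ell\in L(v)$. So the map sending $f$ to the pair
\[
\bigl(a,\ f|_{X\setminus L(v)}\bigr), \qquad f|_{X\setminus L(v)}\colon X\setminus L(v)\to C\setminus\{a\},
\]
is a bijection between the characters counted by $m_{N,v}(T,c)$ and such pairs. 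This uniqueness of $a$ is the only point needing care, since it is what rules out double counting.

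\textbf{Step 2 (counting).} There are $c$ choices for $a$. For each choice, each of the $n-k$ leaves outside $L(v)$ independently receives one of the $c-1$ traits different from $a$. The product rule therefore gives
\[
m_{N,v}(T,c)=c\,(c-1)^{n-k}.
\]
This holds uniformly whether $v$ is internal or a leaf. In the edge cases $v=\rho_T$ (so $k=n$) and $c=1$, the formula still agrees with direct inspection, using the convention $0^0=1$.

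\textbf{Step 3 (relevant count).} By definition, $v$ induces relevant monophyly for $f$ iff $v$ induces monophyly for $f$ and $|L(v)|\ge 2$. The extra condition $|L(v)|\ge2$ depends only on $v$, not on $f$, so no character is filtered out when passing from $m_{N,v}$ to $m_v$. Hence $m_v(T,c)=m_{N,v}(T,c)$, and the formula of Step 2 transfers. Every internal vertex of a rooted binary tree has at least two leaf descendants, so $|L(v)|\ge2$ holds automatically there. For a leaf $v$ we have $|L(v)|=1$, so the formula is obtained through $m_{N,v}$; the relevant variant does not distinguish characters for a fixed $v$.

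Nothing here is a genuine obstacle. Apart from the uniqueness of $a$ in Step 1, the only thing to watch is reading the relevance condition correctly, as a property of the vertex $v$ that is independent of $f$.
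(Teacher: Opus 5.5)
Your Steps 1 and 2 are correct and are exactly the paper's argument: choose the common trait $a$ on $L(v)$ in $c$ ways, then give each of the $n-|L(v)|$ remaining leaves a trait from $C\setminus\{a\}$. Your remark that $a$ is determined by $f$, because $L(v)\neq\emptyset$, just makes explicit why this product count has no double counting. This establishes $m_{N,v}(T,c)=c(c-1)^{n-|L(v)|}$ for every vertex $v$, leaf or internal.

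Step 3, however, rests on a wrong inference. The condition $|L(v)|\ge 2$ depends only on $v$, so for fixed $v$ passing from $m_{N,v}$ to $m_v$ removes either no characters or all of them. It does not follow that it removes none.

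\begin{itemize}
\item For an internal vertex nothing is removed, since $|L(v)|\ge 2$ holds automatically. So $m_v(T,c)=m_{N,v}(T,c)$ there.
\item For a leaf $v$ we have $|L(v)|=1$, so $v$ induces relevant monophyly for no character. This gives $m_v(T,c)=0$, whereas $m_{N,v}(T,c)=c(c-1)^{n-1}>0$ for $c\ge 2$.
\end{itemize}

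The paper records exactly this just before the lemma. For leaf $1$ of $T_2$ in Figure~\ref{fig:firstexample}, it states $m_{N,1}(T_2,2)=2$ but $m_1(T_2,2)=0$. So the equality $m_v=m_{N,v}$ in the statement is true, and provable, only for internal vertices. For leaves only the formula for $m_{N,v}$ holds.

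The paper's one-line proof never treats the relevant count separately, so the statement is loose at this point. Your proposal goes further and explicitly asserts the false leaf case with a non sequitur. Replace the last part of Step 3 with this case distinction: the formula for $m_{N,v}$ holds for all vertices, $m_v=m_{N,v}$ for internal $v$, and $m_v=0$ for leaves. That is also how the lemma is actually used later in the paper: relevant counts only over internal vertices, non-relevant counts over all vertices.
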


\begin{proof}
    The leaf descendants of $v$ (or $v$ itself, if $v$ is a leaf) can be assigned any of the $c$ traits in $C$, say $a$, and the remaining leaves of $T$ may be assigned any remaining trait in $C \backslash\{a\}$. 
\end{proof}

\begin{lemma}\label{lem_ancdes}
    Let $T$ be a phylogenetic tree, $v$ be a vertex of $T$, and $f$ be a character on $T$. If $v$ induces monophyly on $T$ for $f$, then no ancestor or descendant of $v$ other than $v$ itself induces monophyly on $T$ for $f$.
\end{lemma}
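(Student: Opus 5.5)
Suppose $v$ induces monophyly on $T$ for $f$, witnessed by a trait $a \in C$, so a leaf $\ell$ satisfies $f(\ell)=a$ exactly when $\ell \in L(v)$. Let $u \neq v$ be an ancestor or descendant of $v$ that also induces monophyly, with some trait $b$, so $f(\ell)=b$ exactly when $\ell \in L(u)$. The plan is to derive a contradiction. The only facts needed are that comparable vertices have nested leaf sets, and that in a rooted binary tree a vertex has strictly more leaf descendants than any proper descendant.

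First, suppose $u$ is a proper descendant of $v$, so $L(u) \subsetneq L(v)$.
\begin{enumerate}
\item Every leaf in $L(u)$ lies in $L(v)$, so it has trait $a$. Since $L(u)$ is non-empty, this forces $b=a$.
\item Then the leaves carrying trait $a$ are exactly $L(v)$ and also exactly $L(u)$, so $L(u)=L(v)$.
\item This contradicts $L(u) \subsetneq L(v)$.
\end{enumerate}

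The case where $u$ is a proper ancestor of $v$ is symmetric: swap the roles of $u$ and $v$ (and of $a$ and $b$).

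The one point needing care is the strict containment $L(u)\subsetneq L(v)$ for a proper descendant $u$ of $v$. Here $v$ is not a leaf, so it has two children. Exactly one of them, say $w$, is an ancestor of $u$, so $L(u)\subseteq L(w)$. The other child $w'$ has at least one leaf descendant, and $L(w')$ is disjoint from $L(w)$ because $w$ and $w'$ are incomparable. Hence $L(v)\setminus L(u)$ is non-empty. Apart from this, the argument is routine.
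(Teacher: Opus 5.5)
Your proof is correct and takes essentially the same approach as the paper: nestedness of leaf sets forces the two witnessing traits to coincide, and a leaf lying in the larger leaf set but not the smaller one then gives the contradiction. Your explicit argument that $L(u)\subsetneq L(v)$, via the second child of $v$, supplies a step the paper takes for granted when it picks $\ell \in L(w)\setminus L(v)$.
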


\begin{proof}
    Suppose $v$ induces monophyly on $T$ for $f$, and let $w$ be an ancestor of $v$ in $T$ for which $w \ne v$. Suppose, seeking a contradiction, that $w$ also induces monophyly on $T$ for $f$. Denote the trait for which $v$ induces monophyly by $a$. Since any descendant of $v$ is a descendant of $w$, the trait for which monophyly is induced by $w$ must also be $a$. Consider some leaf $\ell \in L(w) \backslash L(v)$. Then $\ell$ must also be assigned trait $a$, but $a$ is not a leaf descendant of $v$, contradicting the monophyly induced by $v$. Hence $w$ cannot induce monophyly on $T$ for $f$.

    By the same reasoning, if $w$ were a descendant of $v$ where $w \ne v$, we see $v$ and $w$ cannot both induce monophyly on $T$ for $f$, as $v$ would be an ancestor of $w$. Thus, the statement in the lemma holds.
\end{proof}

The previous observations immediately lead to the following corollary.

\begin{corollary}\label{c:antichain}
    Let $T$ be a phylogenetic tree and $f$ be a character on $T$. Then the set of (internal) vertices that induce (relevant, full, or full relevant) monophyly on $T$ for $f$ form a (relevant) antichain. 
\end{corollary}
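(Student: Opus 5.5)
The plan is to read the corollary directly off Lemma~\ref{lem_ancdes}, applied pairwise to the set
\[ S_f \coloneqq \{ v \in V(T) : v \text{ induces monophyly on } T \text{ for } f \}. \]
Let $v,w\in S_f$ be distinct. If $w$ were an ancestor or a descendant of $v$, then Lemma~\ref{lem_ancdes} would be violated, since $v$ induces monophyly and no ancestor or descendant of $v$ other than $v$ itself can do so. Hence any two distinct elements of $S_f$ are incomparable under the ancestor relation. Whenever $S_f$ is non-empty, it is therefore an antichain in the sense of the definition above. The non-emptiness requirement is implicit: the statement concerns characters for which such vertices exist.

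Next I will treat the qualified variants. The set of vertices inducing \emph{relevant} monophyly is the subset of $S_f$ consisting of vertices $v$ with $|L(v)|\ge 2$, which are exactly the internal vertices of $S_f$. A subset of an antichain is again an antichain, and since it contains no leaves, it is a relevant antichain by definition.

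For \emph{full} and \emph{full relevant} monophyly, the definition only adds the condition that $f$ itself is fully monophyletic. This is a property of $f$, not of the individual vertices. The corresponding vertex sets are therefore again $S_f$ or its internal part, and the same conclusion holds.

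The only point needing care is bookkeeping rather than mathematics. One must check that the parenthetical "(internal)" matches "(relevant)", since a vertex induces relevant monophyly exactly when it is internal, because $T$ is binary and every internal vertex has at least two leaf descendants. One must also note the non-empty convention in the definition of an antichain. There is no real obstacle beyond this.
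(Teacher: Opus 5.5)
Your proposal is correct and is essentially the paper's argument: the paper presents the corollary as an immediate consequence of Lemma~\ref{lem_ancdes}, applied pairwise to the vertices inducing monophyly, exactly as you do. Your extra remarks on the non-emptiness convention, on relevant vertices being precisely the internal ones (because $T$ is binary), and on ``full'' being a property of $f$ rather than of the vertices are accurate bookkeeping that the paper leaves implicit.
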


We now provide formulae for calculating relevant and non-relevant monophyly numbers. In order to do so, we use the notation $(x)_k = x(x-1)\dots(x-k+1)$, defining $(x)_k=0$ if $x<k$.

\begin{theorem}
\label{thm:gen.theorem}
    Let $T$ be a phylogenetic tree on $n$ leaves and $C$ a set of traits with $|C|=c$. Let $\mathcal{A}$ be the set of antichains that exist in $T$. Then

    \[m_N(T,c) = \sum_{A \in \mathcal{A}} (-1)^{|A|+1} (c)_{|A|} (c-|A|)^{n-s(A)},\]
with the convention that $(c-|A|)^{n-s(A)}=1$ if both $c-|A|=0$ and $n-s(A)=0$. If, instead, $\mathcal{A}$ is the set of relevant antichains, then 

\[m(T,c) = \sum_{A \in \mathcal{A}} (-1)^{|A|+1} (c)_{|A|} (c-|A|)^{n-s(A)},\]

with the same convention.
\end{theorem}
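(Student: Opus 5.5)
The plan is to apply inclusion--exclusion over the events ``vertex $v$ induces monophyly.'' For each vertex $v$ of $T$, let $E_v$ be the set of characters $f\colon X\to C$ for which $v$ induces monophyly on $T$ for $f$. By definition, $f$ is monophyletic exactly when some vertex $v$ has $L(v)$ monochromatic. That is not the same as $v$ inducing monophyly, since another leaf outside $L(v)$ may share the trait, and this discrepancy has to be resolved first.

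The key observation is that if $L(v)$ is monochromatic of trait $a$, we can walk up from $v$ to the highest ancestor $w$ with $L(w)$ still monochromatic. The natural hope is that $w$ then induces monophyly, but this can fail: some leaf far away may also carry $a$. So the events are genuinely different, and I would check the paper's intended reading using the $T_2$ example. For $T_2$ and $\alpha\alpha\beta\alpha$, the character is marked M via the singleton $\{3\}$, since $\beta$ occurs only at leaf $3$. For $\alpha\beta\alpha\beta$ it is not M, even though $\{1\}$ is a monochromatic singleton. Hence the table uses ``monophyletic'' in the sense of inducing monophyly, where the trait occurs exactly on $L(v)$. I will therefore take $m_N(T,c)=|\bigcup_v E_v|$, and $m(T,c)=|\bigcup_{v \text{ internal}} E_v|$, as the working definition, consistent with the notion of inducing monophyly introduced before the theorem.

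Inclusion--exclusion then gives $|\bigcup_v E_v| = \sum_{\emptyset\neq S}(-1)^{|S|+1}|\bigcap_{v\in S}E_v|$. By Lemma~\ref{lem_ancdes} (equivalently Corollary~\ref{c:antichain}), $\bigcap_{v\in S}E_v=\emptyset$ whenever $S$ contains two comparable vertices, so only antichains $A$ contribute. For an antichain $A=\{v_1,\dots,v_k\}$, I count $\bigcap E_{v_i}$ directly. Each $L(v_i)$ must be monochromatic with trait $a_i$, and no leaf outside $L(v_i)$ may carry $a_i$. So the $a_i$ must be pairwise distinct, since the leaf sets $L(v_i)$ are disjoint and nonempty, and an equal trait would appear outside one of them. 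This gives $(c)_k$ ordered choices. The remaining $n-s(A)$ leaves must avoid all of $a_1,\dots,a_k$, giving $(c-k)^{n-s(A)}$ choices.

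The edge case $c=k$ and $n=s(A)$ contributes exactly one assignment of the empty set of remaining leaves, which matches the stated convention $0^0=1$. When $c<k$, the factor $(c)_k=0$ correctly kills the term. This recovers Lemma~\ref{l:single.v.count} for $|A|=1$ and yields the first formula. For $m(T,c)$ I run the identical argument with $v$ restricted to internal vertices, so the surviving index sets are exactly the relevant antichains. The main obstacle is the definitional point above, namely justifying that ``(relevant) monophyletic'' coincides with ``some (internal) vertex induces monophyly.'' I would settle it explicitly against Table~\ref{tab:mono.nums} before doing the routine inclusion--exclusion.
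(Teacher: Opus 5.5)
Your proposal is correct and follows the paper's own proof: inclusion--exclusion over the events that a vertex induces (relevant) monophyly, with Lemma~\ref{lem_ancdes} eliminating index sets that contain comparable vertices, and each (relevant) antichain $A$ contributing $(c)_{|A|}(c-|A|)^{n-s(A)}$ characters. Your explicit check, against Table~\ref{tab:mono.nums}, that ``monophyletic'' must mean that some vertex induces monophyly (the trait occurs exactly on $L(v)$) addresses a point the paper uses only implicitly, when it asserts that some vertex must induce monophyly for every monophyletic $f$.
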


\begin{proof}
The proofs for relevant and non-relevant monophyletic characters proceed almost identically. We simply distinguish between antichains and relevant antichains where appropriate. 

Suppose $f$ is a (relevant) monophyletic character on $T$. Then, at least one vertex in $T$ must induce (relevant) monophyly on $T$ for $f$.

Now, we consider the number of possible ways in which a set $A=\{v_1,\dots,v_k\}$ of (internal) vertices of $T$ can induce monophyly on $T$ for some character $f$, noting that of course, these $k$ (internal) vertices must form a (relevant) antichain by Corollary \ref{c:antichain}, otherwise $A$ cannot induce (relevant) monophyly. Otherwise, we may proceed by assigning any of $c$ traits to the vertices in $L(v_1)$, any of $c-1$ to the vertices in $L(v_2)$, and so on. If $k>c$, then of course again $A$ cannot induce (relevant) monophyly. 

We then consider the number of ways to assign traits to leaves that are not in $\cup_{i=1,\dots,k} L(v_i)$. If there are no such leaves, i.e. $n-s(A)=0$, we have found a set that can simultaneously induce (relevant) monophyly (indeed, we have found a set that induces full (relevant) monophyly).
Otherwise, if there exist such leaves, but we have no possible traits to assign to them, there is no possible way for $A$ to induce (relevant) monophyly. Finally, if there are $n-s(A)>0$ such leaves and sufficient traits to assign to them, we may assign any of the $c-|A|$ traits to each of them, in $(c-|A|)^{n-s(A)}$ possible ways.

It follows, for a given (relevant) antichain $A$, that there are 
\[ (c)_{|A|} (c-|A|)^{n-s(A)}\]
possible characters for which $A$ induces (relevant) monophyly on $T$ (with the convention that $(c-|A|)^{n-s(A)}=1$ if both $c-|A|=0$ and $n-s(A)=0$, and noting that if $c-|A|=0$ in any other case we have seen that $A$ cannot induce (relevant) monophyly on $T$).

Finally, the expression in the theorem statement is obtained by application of the Inclusion-Exclusion Principle.
\end{proof}

Suppose that instead we wish to count the number of (relevant) monophyletic characters that induce at least $k$ monophyletic traits. Denote the number of relevant monophyletic characters on $T$ with $c$ traits and at least $k$ relevant monophyletic traits by $m(T,c,k)$. Define the number for non-relevant monophyletic characters similarly and denote it $m_N(T,c,k)$. Then we have the following theorem, with proof omitted as it proceeds almost identically to the previous proof. Note in particular the modified exponent of $(-1)$, due to the application of the Inclusion-Exclusion Principle starting at antichains of size $k$.

\begin{theorem}
Let $T$ be a phylogenetic tree on $n$ leaves, and let $C$ be a set of traits with $|C|=c$. Let $\mathcal{A}_k$ denote the set of antichains of size at least $k$ that exist in $T$. Then
\[
m_N(T,c,k)
=
\sum_{A\in \mathcal{A}_k}
(-1)^{|A|+k}
(c)_{|A|}
(c-|A|)^{n-s(A)},
\]
with the convention that
\[
(c-|A|)^{n-s(A)}=1
\]
whenever both $c-|A|=0$ and $n-s(A)=0$. If, instead, $\mathcal{A}$ is the set of relevant antichains of size at least $k$, then

\[
m(T,c,k)
=
\sum_{A\in \mathcal{A}_k}
(-1)^{|A|+k}
(c)_{|A|}
(c-|A|)^{n-s(A)},
\]
with the same convention.
\end{theorem}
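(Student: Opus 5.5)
We follow the proof of Theorem~\ref{thm:gen.theorem} almost verbatim and change only the final inclusion--exclusion step. For every (relevant) antichain $A$, let $E_A$ denote the set of characters $f$ for which $A$ induces (relevant) monophyly on $T$.

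\textbf{Step 1: counting a single antichain.} The counting argument in the proof of Theorem~\ref{thm:gen.theorem} applies unchanged and gives
\[
|E_A|=(c)_{|A|}(c-|A|)^{n-s(A)},
\]
with the stated convention. Distinct vertices of $A$ induce monophyly for distinct traits. So $f\in E_A$ exhibits at least $|A|$ monophyletic traits.

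\textbf{Step 2: a canonical antichain for each character.} Fix a character $f$. Let $M_f$ be the set of all (internal) vertices inducing (relevant) monophyly for $f$. By Corollary~\ref{c:antichain}, $M_f$ is a (relevant) antichain. Each monophyletic trait of $f$ is witnessed by exactly one vertex, namely the vertex whose leaf set is exactly the trait class. Hence $r:=|M_f|$ is the number of (relevant) monophyletic traits of $f$. Moreover, $f\in E_A$ holds if and only if $A\subseteq M_f$, and every subset of an antichain is again an antichain.

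\textbf{Step 3: weight received by $f$.} Exchanging the order of summation, $f$ contributes
\[
w(r)=\sum_{j=k}^{r}(-1)^{j+k}\binom{r}{j}
\]
to the right-hand side. It therefore suffices to show that $w(r)=1$ for every $r\ge k$ and $w(r)=0$ for $r<k$.

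The case $r<k$ is immediate, since the sum is empty. For $k=1$ we have $w(r)=1-(1-1)^r=1$, which recovers Theorem~\ref{thm:gen.theorem}.

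\textbf{Main obstacle.} The difficult step is $r\ge k\ge2$. The first thing to try is the telescoping identity $\binom{r}{j}=\binom{r-1}{j}+\binom{r-1}{j-1}$. It collapses the alternating tail to
\[
w(r)=\binom{r-1}{k-1}.
\]
This equals $1$ only when $r=k$ (or $k=1$).

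So before finishing, I would check small cases directly, for instance a tree with three cherries, $c=3$ and $k=2$. If a character with $r>k$ exists there, it is counted $\binom{r-1}{k-1}>1$ times. In that case the statement as written needs the weight $\binom{|A|-1}{k-1}$ in each summand, or it must be read as counting with this multiplicity. Consequently, the plan establishes the formula as stated exactly when no character has more than $k$ monophyletic traits; in particular, it holds for $k=1$ and whenever $k$ equals the maximal antichain size usable with $c$ traits.
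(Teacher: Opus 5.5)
Your weight computation is correct, and you should state its conclusion outright instead of leaving it conditional: the statement as printed is false for $k\ge 2$. A character with $r>k$ already exists in the paper's own example. Take the six-leaf tree of Figure~\ref{fig:sixleavesexample} (three cherries), with $c=3$ and $k=2$. The relevant antichains of size at least $2$ are $A_6,\dots,A_9$, each of size $2$ and each contributing $+(3)_2\cdot 1=+6$, and $A_{10}=\{v_1,v_2,v_3\}$, contributing $-(3)_3\cdot 0^0=-6$. So the printed right-hand side is $18$.

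On the other hand, a character has at least two relevant monophyletic traits only if one of $A_6,\dots,A_9$ induces monophyly. This happens exactly for two families of characters:
\begin{itemize}
\item the $6$ characters of the form $\alpha\alpha\beta\beta\gamma\gamma$, where $M_f=\{v_1,v_2,v_3\}$ and $r=3$;
\item the $6$ characters of the form $\alpha\alpha\alpha\alpha\beta\beta$, where $M_f=\{v_3,v_4\}$ and $r=2$.
\end{itemize}
Hence $m(T,3,2)=12$. The excess of $6$ comes from the six characters with $r=3$, each counted $\binom{2}{1}=2$ times, just as your $w(r)=\binom{r-1}{k-1}$ predicts. The non-relevant version fails in the same way. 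On the three-leaf tree with $c=3$ and $k=2$, the right-hand side is $4\cdot 6-6=18$, but $m_N(T,3,2)=12$: the $6$ all-distinct characters plus the $6$ of the form $\alpha\alpha\beta$ with $[1,2]$ the cherry.

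The paper gives no proof beyond the remark that one applies inclusion--exclusion ``starting at antichains of size $k$'' with sign $(-1)^{|A|+k}$. Your Step~3 shows that this truncation does not count the characters lying in at least $k$ of the events $E_{\{v\}}$. Your Steps~1--3 are sound:
\begin{itemize}
\item $f\in E_A$ if and only if $A\subseteq M_f$;
\item subsets of (relevant) antichains are again (relevant) antichains;
\item $|M_f|$ equals the number of (relevant) monophyletic traits of $f$.
\end{itemize}
So your argument actually proves the corrected statement
\[
m_N(T,c,k)=\sum_{A\in\mathcal{A}_k}(-1)^{|A|+k}\binom{|A|-1}{k-1}(c)_{|A|}(c-|A|)^{n-s(A)},
\]
and the same formula over relevant antichains for $m(T,c,k)$. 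This agrees with the printed formula only when $k=1$, or when no character has more than $k$ (relevant) monophyletic traits.
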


\begin{example}
    Let $T$ be the tree on $6$ leaves depicted in Figure \ref{fig:sixleavesexample}. We will show that $m(T,3)=141$ (noting of course that this is the \emph{relevant} monophyly number). The antichains of size $1$ in $T$ are simply $A_i=\{v_i\}$ for $i=1,\dots,5$. The antichains of size $2$ are $A_6=\{v_1,v_2\}$, $A_7=\{v_1,v_3\}$, $A_8=\{v_2,v_3\}$, and $A_9= \{v_3,v_4\}$. Finally, the single antichain of size $3$ is $A_{10}=\{v_1,v_2,v_3\}$, and there are no larger antichains. For each of $A_1,A_2$, and $A_3$, $T$ has four leaves that are not descendants of their vertices, and hence they contribute $3\cdot2^4=48$ characters. For $A_4$, there are two leaves that are not descendants, so $3\cdot2^2=12$ characters are contributed, and for $A_5$ there are no leaves that are not descendants, so $3$ characters are contributed. This is a total of $159$ monophyletic characters on $T$ induced by the antichains of size $1$.

    Each of $A_6,A_7$, and $A_8$ contain $2$ vertices, there are two leaves in $T$ that are not descendants of one of their vertices, and hence they contribute $(3)_2 \cdot 1^2 =6$ monophyletic characters. For $A_9$ there are no leaves that are not descendants of one of its vertices, so $A_9$ contributes $(3)_2=6$ as well. Hence there are $24$ monophyletic characters on $T$ induced by the antichains of size $2$.

    Finally, for the single antichain of size $3$, $T$ has no leaves that are not descendants of one of its vertices, so there are $(3)_3=6$ monophyletic characters induced by $A_{10}$. Thus, by applying the Inclusion-Exclusion Principle, we have $m(T,3)=159-24+6=141$. 
\end{example}

\begin{figure}
    \centering

\begin{tikzpicture}[
    sdot/.style={circle,fill,radius=1pt,inner sep=1pt},
    thick
]

\node[sdot] (1) at (0,0) {}; 
\node[sdot] (2) at (2,0) {}; 
\node[sdot] (3) at (4,0) {}; 
\node[sdot] (4) at (6,0) {}; 
\node[sdot] (5) at (8,0) {}; 
\node[sdot] (6) at (10,0) {}; 

\node[sdot] (A) at (1,1) {}; 
\node[sdot] (B) at (5,1) {}; 
\node[sdot] (C) at (3,3) {}; 
\node[sdot] (D) at (9,1) {}; 
\node[sdot] (E) at (5,5) {}; 

\draw (1)--(A)--(2);
\draw (3)--(B)--(4);
\draw (A)--(C)--(B);
\draw (5)--(D)--(6);
\draw (C)--(E)--(D);

\node[below=2pt of 1] {1};
\node[below=2pt of 2] {2};
\node[below=2pt of 3] {3};
\node[below=2pt of 4] {4};
\node[below=2pt of 5] {5};
\node[below=2pt of 6] {6};

\node[left=2pt of A] {$v_1$};
\node[left=2pt of B] {$v_2$};
\node[left=2pt of C] {$v_4$};
\node[left=2pt of D] {$v_3$};
\node[left=2pt of E] {$v_5$};
\end{tikzpicture}
    \caption{A tree $T$ on six leaves, with internal vertices labelled.}
    \label{fig:sixleavesexample}
\end{figure}

\section{Enumerating monophyletic characters based on the underlying tree shape}\label{s:treeshape} 

It can easily be seen that the number of (relevant/non-relevant and/or fully/non-fully) monophyletic characters employing $c$ different traits on a given tree $T$ does not depend on the actual leaf labeling of $T$, but only on its tree shape.  
So, in this subsection, we investigate the dependency of the various monophyly numbers on the underlying tree shape. In particular, we will investigate caterpillar trees and  fully balanced trees, as these represent extremal trees in terms of balance. 

\subsection{Calculating  the monophyly numbers for caterpillars}

\begin{lemma}\label{lem:cat_anti}
    Let $T^{cat}_n$ be a caterpillar tree with $n \in \mathbb{N}_{\ge 2}$ leaves. Then every antichain of $T_n^{cat}$ either consists entirely of leaves, or consists of a single internal vertex $v$ together with some subset of the leaves not descended from $v$.
\end{lemma}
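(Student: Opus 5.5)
The plan is to exploit the chain structure of the internal vertices of the caterpillar. First we fix notation for $T_n^{cat}$. Its internal vertices can be labelled $u_1,\dots,u_{n-1}$, where $u_1$ is the parent of the unique cherry, $u_{n-1}=\rho_T$, and for $2\le i\le n-1$ the children of $u_i$ are $u_{i-1}$ and a leaf. Everything rests on the following observation: since $T_n^{cat}$ has exactly one cherry, every internal vertex other than $u_1$ has exactly one internal child. Hence $u_{i}$ is the parent of $u_{i-1}$, and the internal vertices form a single path $u_1,u_2,\dots,u_{n-1}$ from the cherry parent up to the root. If we wanted to be fully careful, we would justify this labelling by induction on $n$: removing the cherry and relabelling its parent as a leaf gives a tree with one fewer leaf and again only one cherry, unless $n=2$.

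The key step then follows directly. For $i<j$, the vertex $u_j$ is an ancestor of $u_i$, because $u_i,u_{i+1},\dots,u_j$ is a path in which each vertex is the parent of the previous one. So any two distinct internal vertices are comparable under the ancestor relation. Consequently, by the definition of an antichain, an antichain $A$ contains at most one internal vertex.

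Now we split into two cases.
\begin{itemize}
\item If $A$ contains no internal vertex, then it consists entirely of leaves.
\item If $A$ contains exactly one internal vertex $v$, then every other element of $A$ is a leaf $\ell$ that is not comparable to $v$. A leaf has no descendants other than itself, so $\ell$ cannot be an ancestor of $v$. Incomparability therefore just means $\ell$ is not a descendant of $v$, i.e. $\ell\notin L(v)$. This gives exactly the second form stated in the lemma.
\end{itemize}
Note that any subset of leaves is an antichain, since distinct leaves are never comparable. Likewise, $\{v\}$ together with any subset of $X\setminus L(v)$ is an antichain. So the description is also exhaustive in the converse direction, should that be needed later for counting.

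The only step requiring genuine care is the structural claim that the internal vertices form a chain. I would handle it by noting that the number of cherries equals the number of internal vertices both of whose children are leaves. In a binary tree, an internal vertex with two internal children would force at least two cherries below it, one in each of its two pendant subtrees. So with one cherry, each internal vertex has at most one internal child. Moreover, each internal vertex other than $u_1$ has exactly one internal child, since otherwise both its children are leaves and it would be a second cherry parent. The internal vertices therefore form a path descending from the root, which is the claimed chain.
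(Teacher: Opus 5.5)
Your proof is correct and follows the same route as the paper: the internal vertices of $T_n^{cat}$ form a chain, so an antichain contains at most one internal vertex, and any remaining elements are leaves lying outside $L(v)$. The paper simply asserts the chain property. You go further and justify it through the single-cherry argument, which fills in the one step the paper leaves implicit.
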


\begin{proof}
    The proof follows directly from the fact that the internal vertices of $T_n^{cat}$ form a chain (that is, they are totally ordered by the ancestor-descendant relation on $T_n^{cat}$).
\end{proof}

\begin{theorem}\label{thm_catvalue}
Let $T_n^{cat}$ be a caterpillar tree with $n\in\mathbb{N}_{\geq 2}$ leaves. Let $c\in\mathbb{N}_{\geq 2}$ be a number of traits. Then, we have:  $m(T_n^{cat},c)=\sum\limits_{j=0}^{n-2}c\cdot(c-1)^{j}$.
\end{theorem}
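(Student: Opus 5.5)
By Lemma~\ref{lem:cat_anti}, the internal vertices of $T_n^{cat}$ form a chain. Hence every relevant antichain, which contains no leaves, consists of a single internal vertex. I will apply Theorem~\ref{thm:gen.theorem} with $\mathcal{A}$ the set of relevant antichains. Only singletons $A=\{v\}$ then occur, so the inclusion--exclusion sum has no higher-order correction terms:
\[ m(T_n^{cat},c)=\sum_{v\in\mathring{V}(T_n^{cat})} c\,(c-1)^{n-|L(v)|}. \]
This is just Lemma~\ref{l:single.v.count} summed over the inner vertices. The sum is exact because, by Corollary~\ref{c:antichain}, at most one internal vertex can induce relevant monophyly for a given character, so the events are disjoint.

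It remains to list the clade sizes. I will label the inner vertices of the caterpillar from the cherry upward as $u_2,u_3,\dots,u_n$, where $u_2$ is the parent of the cherry and $u_n=\rho_T$. Each step up the chain adds exactly one leaf, so $|L(u_k)|=k$ for $k=2,\dots,n$. Substituting $j=n-k$, which runs over $0,\dots,n-2$, gives
\[ m(T_n^{cat},c)=\sum_{j=0}^{n-2} c\,(c-1)^{j}. \]
The exponent convention does not intervene, since $c-|A|=c-1\ge 1$.

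The only point needing care is to verify that the hypothesis $c\ge 2$ is all that is required and that the root term $j=0$, counting the $c$ constant characters, is correct. The root does induce relevant monophyly for a constant character, since $L(\rho)=X$ and $n\ge 2$. So there is no real obstacle: the argument is the chain structure from Lemma~\ref{lem:cat_anti} plus the bookkeeping $|L(u_k)|=k$.
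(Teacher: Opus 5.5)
Your proof is correct and follows essentially the same route as the paper's. Relevant antichains in the caterpillar are single internal vertices (Lemma~\ref{lem:cat_anti}), there is exactly one internal vertex with $j$ leaf descendants for each $j=2,\dots,n$, and Lemma~\ref{l:single.v.count} gives each such vertex a contribution of $c(c-1)^{n-j}$; your remark that Theorem~\ref{thm:gen.theorem} has no higher-order inclusion--exclusion terms here just makes explicit what the paper's proof uses implicitly.
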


\begin{proof}
By Lemma \ref{lem:cat_anti}, every relevant antichain in $T_n^{cat}$ consists of a single internal vertex. Moreover, for each $j\in\{2,\dots,n\}$, there is exactly one internal vertex with $j$ descendant leaves.

By Lemma \ref{l:single.v.count}, an internal vertex with $j$ descendant leaves contributes
\[
c(c-1)^{n-j}
\]
relevant characters. Summing over all possible $j$ gives
\[
m(T_n^{cat},c)
=
\sum_{j=2}^{n} c(c-1)^{n-j}
=
\sum_{j=0}^{n-2} c(c-1)^j.
\]
\end{proof}

\begin{theorem}\label{thm_catvalue_nonrelevant}
Let $T_n^{cat}$ be a caterpillar tree with $n\in\mathbb{N}_{\geq 2}$ leaves. Then, we have:  \[
m_N(T_n^{cat},c)
=
\sum_{k=1}^{n}
(-1)^{k+1}
(c)_k
\binom{n}{k}
(c-k)^{n-k}
\;+\;
\sum_{j=2}^{n}
\sum_{k=1}^{n-j+1}
(-1)^{k+1}
(c)_k
\binom{n-j}{k-1}
(c-k)^{n-j}.
\]
\end{theorem}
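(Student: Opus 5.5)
The plan is to specialize Theorem~\ref{thm:gen.theorem} (the non-relevant version) to $T_n^{cat}$, using Lemma~\ref{lem:cat_anti} to enumerate all antichains explicitly. By that lemma, the set $\mathcal{A}$ of antichains of $T_n^{cat}$ splits into two disjoint families:
\begin{itemize}
\item $\mathcal{A}_1$, the non-empty sets consisting only of leaves;
\item $\mathcal{A}_2$, the sets consisting of exactly one internal vertex $v$ together with a (possibly empty) set of leaves not descended from $v$.
\end{itemize}
I will evaluate the inclusion--exclusion sum of Theorem~\ref{thm:gen.theorem} separately over $\mathcal{A}_1$ and $\mathcal{A}_2$. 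The two pieces should give the two summands in the statement, with the same convention for $0^0$ as in Theorem~\ref{thm:gen.theorem}.

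For $\mathcal{A}_1$: any $k$ leaves are pairwise incomparable, so for each $k\in\{1,\dots,n\}$ there are exactly $\binom{n}{k}$ such antichains. Each has $|A|=k$ and $s(A)=k$. Substituting into $(-1)^{|A|+1}(c)_{|A|}(c-|A|)^{n-s(A)}$ and grouping by $k$ gives
\[
\sum_{k=1}^n(-1)^{k+1}(c)_k\binom{n}{k}(c-k)^{n-k},
\]
which is the first sum. Terms with $k>c$ vanish automatically because $(c)_k=0$, so no separate case distinction is needed.

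For $\mathcal{A}_2$, I will parametrize by the internal vertex. As in the proof of Theorem~\ref{thm_catvalue}, for each $j\in\{2,\dots,n\}$ there is exactly one internal vertex $v_j$ with $|L(v_j)|=j$. The leaves not descended from $v_j$ number $n-j$. An antichain in $\mathcal{A}_2$ with $|A|=k$ consists of $v_j$ together with $k-1$ of these $n-j$ leaves, so there are $\binom{n-j}{k-1}$ of them, and $k$ ranges over $1,\dots,n-j+1$. Grouping by $(j,k)$ then yields a double sum with exactly the outer and inner ranges, sign $(-1)^{k+1}$, falling factorial $(c)_k$ and binomial coefficient $\binom{n-j}{k-1}$ appearing in the second summand of the statement.

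What remains is to match the power of $(c-k)$ with $(c-k)^{n-j}$ as written. This is the step I expect to be the main obstacle, since it depends on exactly how $n-s(A)$ is computed for these mixed antichains. Here $L(A)=L(v_j)$ together with the $k-1$ chosen leaves, and I will recompute $s(A)$ directly from the definition in this decomposition. I will then compare against the exponent in the statement, checking the boundary cases $k=1$ (where $A=\{v_j\}$ and the term must agree with Lemma~\ref{l:single.v.count}, namely $c(c-1)^{n-j}$) and $j=n$ (the root, where the exponent is $0$). As a final sanity check, I will verify the complete formula against the value $m_N(T_2,2)=12$ from Table~\ref{tab:mono.nums}, since $T_2$ there is the caterpillar on four leaves. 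Once the exponent is reconciled, adding the contributions of $\mathcal{A}_1$ and $\mathcal{A}_2$ gives the claimed expression for $m_N(T_n^{cat},c)$.
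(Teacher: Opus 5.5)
Your decomposition via Lemma~\ref{lem:cat_anti} is exactly the paper's argument. So are your count $\binom{n}{k}$ for leaf-only antichains and your count $\binom{n-j}{k-1}$ for antichains $\{v_j\}\cup\{\ell_1,\dots,\ell_{k-1}\}$, and your first sum is correct. The gap is the step you postpone: the exponent cannot be reconciled with $(c-k)^{n-j}$.

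For such an antichain, $L(A)$ is the disjoint union of $L(v_j)$ and the $k-1$ chosen leaves, so $s(A)=j+k-1$. Theorem~\ref{thm:gen.theorem} therefore yields the term $(-1)^{k+1}(c)_k\binom{n-j}{k-1}(c-k)^{n-j-k+1}$. For $k\ge 2$ this differs from the printed term whenever $c-k\ge 2$. It also differs when $c=k$ and $A$ covers all leaves, since then $0^0=1$ versus $0^{n-j}=0$. Hence the displayed identity is false as stated. The paper's own proof makes the same slip: it uses $n-j$ as if only $L(v)$ were covered.

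Your boundary checks cannot detect this, because $k=1$ and $j=n$ both force $k-1=0$. Your sanity check does detect it: for the four-leaf caterpillar with $c=2$ the printed formula gives $8+6=14$, not $12$. More generally, for $c=2$ and $n\ge 3$ it gives $4n-2$ instead of the value $4n-4$ from Theorem~\ref{thm:binary.summary}. The reason is the antichain formed by the two children of the root ($j=n-1$, $k=2$, $s(A)=n$): it should contribute $-(2)_2\cdot 0^0=-2$ but receives $-(2)_2\cdot 0^1=0$. Similarly, for $n=4$, $c=3$ the true value is $81-12=69$, whereas the printed formula gives $63$. Here the only non-monophyletic characters are the $2\cdot(3)_2$ ones with trait classes $\{1,3\},\{2,4\}$ or $\{1,4\},\{2,3\}$.

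Once $s(A)$ is computed correctly, your argument actually proves
\[
m_N(T_n^{cat},c)=\sum_{k=1}^{n}(-1)^{k+1}(c)_k\binom{n}{k}(c-k)^{n-k}+\sum_{j=2}^{n}\sum_{k=1}^{n-j+1}(-1)^{k+1}(c)_k\binom{n-j}{k-1}(c-k)^{n-j-k+1},
\]
with the convention $0^0=1$ from Theorem~\ref{thm:gen.theorem}. This gives $12$ for the four-leaf caterpillar and $4n-4$ for $c=2$ in general. So your write-up should not conclude that the claimed expression follows ``once the exponent is reconciled''. It should conclude that the statement needs this corrected exponent.
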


\begin{proof}
By Lemma \ref{lem:cat_anti}, the antichains of $T_n^{cat}$ either consist only of leaves, or one internal vertex $v$ together with some subset of the leaves that are not descendants of $v$. We first consider the antichains consisting only of leaves. 

Finding antichains of length $k$ consisting only of leaves amounts to selecting $k$ of the $n$ leaves. Applying Theorem \ref{thm:gen.theorem} directly, we then obtain that the contribution of these antichains is 
\begin{equation}\label{e:sum1} 
\sum_{k=1}^{n}
(-1)^{k+1}
(c)_k
\binom{n}{k}
(c-k)^{n-k}.
\end{equation}

We now consider the antichains consisting of one internal vertex $v$ together with some subset of the leaves that are not descendants of $v$. In this case, for each of $j=2,\dots,n$, there exists an internal vertex $v$ with $j$ descendant leaves, and the antichains of length $k$ can be found by simply selecting $v$ and then choosing $k-1$ of the $n-j$ remaining leaves. Applying Theorem \ref{thm:gen.theorem}, summing over possible $k$, then summing over possible $j$ gives
\begin{equation}\label{e:sum2}
\sum_{j=2}^{n}
\sum_{k=1}^{n-j+1}
(-1)^{k+1}
(c)_k
\binom{n-j}{k-1}
(c-k)^{n-j}.
\end{equation}

Summing expressions \eqref{e:sum1} and \eqref{e:sum2} together gives the result in the theorem.
\end{proof}

We now consider relevant full monophyly. We prove a more general result that has an immediate corollary for caterpillars.

\begin{theorem}
    \label{thm:full.mono}
Let $T=(T_1,T_2)$ be a rooted binary phylogenetic tree with $n\geq 2$ leaves and with maximal pendant subtrees $T_1$ and $T_2$ with $n_1$ and $n_2$ leaves, respectively, such that $n_1 \geq n_2$. If $n_2=1$, then $fm(T,c)=c$. 
\end{theorem}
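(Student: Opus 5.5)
We will show that when $n_2=1$, the only relevant fully monophyletic characters on $T$ are the $c$ constant characters. First, every constant character $f\equiv a$ qualifies. The root $\rho_T$ satisfies $L(\rho_T)=X$, $|X|=n\geq 2$ and $f$ is constant on $X$, so $X$ is a relevant monophyletic group containing every leaf. This gives at least $c$ such characters.

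For the converse, let $\ell$ be the unique leaf of $T_2$, so $\ell$ is a child of $\rho_T$. Suppose $f$ is relevant fully monophyletic. Then $\ell$ lies in some relevant monophyletic group $Y$, meaning there is a vertex $v$ with $L(v)=Y\ni \ell$, $|Y|\geq 2$ and $f$ constant on $Y$. The vertices $w$ with $\ell\in L(w)$ are exactly the ancestors of $\ell$, namely $\ell$ itself and $\rho_T$. Since $|L(\ell)|=1$, we must have $v=\rho_T$, so $Y=X$ and $f$ is constant on all of $X$. Hence every relevant fully monophyletic character is constant, and there are exactly $c$ of them, so $fm(T,c)=c$.

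A note on which notion of monophyly to use. If full monophyly is read through "inducing monophyly" (a trait occurs exactly on $L(v)$), the argument is unchanged: $\rho_T$ induces monophyly for a constant character, since the "only if" direction is vacuous when $L(\rho_T)=X$.

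There is no real obstacle. The one point needing care is identifying the vertices whose leaf sets contain $\ell$, which uses that $\ell$ is adjacent to the root because $n_2=1$. The hypothesis $n_1\geq n_2$ is only a labelling convention.
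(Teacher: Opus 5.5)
Your proof is correct and follows essentially the same argument as the paper: the leaf child $\ell$ of the root has only $\ell$ and $\rho_T$ as ancestors, so any relevant monophyletic group containing $\ell$ must be $L(\rho_T)=X$, which forces $f$ to be constant. You also explicitly check that each of the $c$ constant characters is relevant fully monophyletic, a step the paper leaves implicit.
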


\begin{proof}
    Suppose $n_2 =1$ and $f$ is a relevant fully monophyletic character on $T$. As $f$ is relevant fully monophyletic, the leaf child of the root, say $\ell$, must be contained in a monophyletic group of size at least two. Hence, there is some inner vertex $v$ that is an ancestor of $\ell$, and all descendants of $v$ share the same trait. As the root is the only inner vertex of $T$ that is an ancestor of $\ell$, it follows that all descendants of the root, in particular all leaves, must have the same trait. As there are $c$ possible traits, $fm(T,c)=c$.
\end{proof}

\begin{corollary}
    Let $T_n^{cat}$ be a caterpillar tree with $n \geq 2$ leaves. Then $fm(T_n^{cat},c)=c$ (independent of $n$). 
\end{corollary}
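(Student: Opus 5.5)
The plan is to reduce the corollary directly to Theorem~\ref{thm:full.mono}. First we fix the standard decomposition $T_n^{cat}=(T_1,T_2)$, labelling the two maximal pendant subtrees so that $n_1\geq n_2$ as in that theorem. The only thing to check is that $n_2=1$, after which the theorem gives $fm(T_n^{cat},c)=c$ immediately, with no dependence on $n$.

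For $n=2$, the caterpillar is a single cherry, so both maximal pendant subtrees are single leaves. Then $n_1=n_2=1$, and the hypothesis $n_2=1$ holds trivially.

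For $n\geq 3$, we use that $T_n^{cat}$ has exactly one cherry. Suppose, for contradiction, that both children of the root were internal vertices. Each of $T_1$ and $T_2$ would then be a rooted binary tree with at least two leaves, and every such tree contains at least one cherry. Together these would give two distinct cherries in $T_n^{cat}$, a contradiction. So one child of the root is a leaf, which means $n_2=1$ and $n_1=n-1\geq n_2$. This matches the description used in Lemma~\ref{lem:cat_anti}, where the internal vertices form a chain.

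Applying Theorem~\ref{thm:full.mono} then yields $fm(T_n^{cat},c)=c$. The only mildly nontrivial step is the structural fact that a caterpillar's root has a leaf child. I would justify it by the one-cherry argument above, using that any rooted binary tree with at least two leaves has a cherry; this holds because an internal vertex of maximal depth must have two leaf children.
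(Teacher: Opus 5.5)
Your proof is correct and matches the paper's own proof. The paper likewise just notes that the caterpillar's root has a leaf child and applies Theorem~\ref{thm:full.mono}; your one-cherry argument supplies a justification for that structural fact, which the paper treats as evident.
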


\begin{proof}
The caterpillar tree $T_n^{cat}$ has a leaf child of the root and hence falls under the assumptions of Theorem \ref{thm:full.mono}.
\end{proof}

\begin{theorem}
    Let $T^{cat}_n$ be a caterpillar tree with $n \ge 2$ leaves. Then we have 

    \[ fm_N(T^{cat}_n,c) = \sum_{k=1}^n (c)_{n-k+1}. 
\]
\end{theorem}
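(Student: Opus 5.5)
Label the caterpillar so that its internal vertices form a chain $v_2,\dots,v_n$ with $|L(v_j)|=j$. Leaves $1,2$ form the cherry below $v_2$, and for $j\ge 3$ leaf $j$ is the leaf child of $v_j$; thus $L(v_j)=\{1,\dots,j\}$. The plan is to characterise fully monophyletic characters on $T_n^{cat}$ directly, rather than through the inclusion--exclusion formula of Theorem~\ref{thm:gen.theorem}. We know from Corollary~\ref{c:antichain} and Lemma~\ref{lem:cat_anti} that the set of vertices inducing monophyly for $f$ is an antichain. Such an antichain consists either only of leaves, or of a single internal vertex $v_j$ together with some leaves not below it. Full monophyly means that these induced groups must cover all of $X$.

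\emph{Step 1 (characterisation).} We claim that $f$ is fully monophyletic if and only if there is some $j\in\{1,\dots,n\}$ with the following two properties. First, leaves $1,\dots,j$ all carry a common trait $a$. Second, the leaves $j+1,\dots,n$ carry pairwise distinct traits, each different from $a$. For $j=1$ the first property is vacuous, so the condition just says that $f$ is injective. For sufficiency, take $v_j$ (or leaf $1$ when $j=1$) together with all leaves $j+1,\dots,n$; each of these vertices induces monophyly, and their leaf sets cover $X$. For necessity, let $A$ be the inducing antichain that covers $X$. If $A$ consists only of leaves, then every leaf forms its own trait class and $f$ is injective, which is the case $j=1$. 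Otherwise $A=\{v_j\}\cup\{\text{leaves}\}$ with $j\ge2$. Covering forces $A$ to contain every leaf $j+1,\dots,n$, and each such leaf must be a singleton trait class. Hence the traits outside $L(v_j)$ are pairwise distinct and distinct from the trait $a$ of $L(v_j)$.

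\emph{Step 2 (disjointness).} The condition can hold for at most one $j$, which we make canonical. Let $j^*$ be the largest $j$ such that leaves $1,\dots,j$ share a trait; this is well defined because $j=1$ always qualifies. If the condition holds for some $j$, it must hold for this particular $j^*$. Indeed, if $j<j^*$, then leaf $j+1$ has the trait $a$, contradicting the second property. If $j>j^*$, then the first property fails by maximality of $j^*$. One subtlety is that for $j=1$ and $j=2$ the first property imposes no real constraint, since leaves $1$ and $2$ may or may not agree. We must therefore check that the cases are still disjoint. If $f(1)=f(2)$, then $f$ is not injective, so only $j\ge2$ can apply, and $j^*\ge2$. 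If $f(1)\ne f(2)$, then the only possibility is $j=1$. So the character classes indexed by $j$ are disjoint, provided the case $j=2$ is read as requiring $f(1)=f(2)$.

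\emph{Step 3 (counting).} For each $j$, the characters are counted by choosing $a$ and then assigning distinct traits, all different from $a$, to the remaining $n-j$ leaves, in order. This gives $(c)_{n-j+1}$ characters. For $j=1$ it gives $(c)_n$, the number of injective characters, which is consistent. Summing over $j=1,\dots,n$ and setting $k=j$ yields $\sum_{k=1}^n (c)_{n-k+1}$. The main obstacle is the disjointness bookkeeping at $j=1,2$, where the role of the cherry is easy to mishandle, so I would check the claim against the $T_2$ column of Table~\ref{tab:mono.nums}. There $c=2$ and $n=4$, and the formula gives $0+0+2+2=4$, as required.
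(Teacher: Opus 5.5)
Your proof is correct and follows essentially the same route as the paper: you classify fully monophyletic characters by their covering antichain, which is either one internal vertex with $j$ descendant leaves together with all remaining leaves, or all leaves ($j=1$). You count $(c)_{n-j+1}$ characters in each case and sum over $j$. You also make explicit the necessity direction and the disjointness of the cases (via $j^*$), which the paper leaves implicit; your side remark that the first property is vacuous for $j=2$ is inaccurate (it forces $f(1)=f(2)$), but your $j^*$ argument already settles disjointness, so nothing is lost.
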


\begin{proof}
    Lemma \ref{lem:cat_anti} states that the antichains of $T_n^{cat}$ either consist only of leaves, or one internal vertex $v$ together with some subset of the leaves that are not descendants of $v$. It follows that every fully monophyletic character is induced either by
    \begin{enumerate}
        \item an antichain consisting of some internal vertex $v$ together with every leaf that is not a descendant of $v$, or
        \item an antichain consisting entirely of leaves.
    \end{enumerate}

    Suppose $v$ has $k$ descendant leaves. Then all descendants of $v$ must be assigned a common trait, while every remaining leaf must be assigned a distinct trait different from that common trait. Hence the number of possible characters induced by such a vertex is
    \[
    c(c-1)_{n-k} = (c)_{n-k+1}.
    \]

    Summing over all internal vertices therefore gives
    \[
    \sum_{k=2}^{n} (c)_{n-k+1},
    \]
    since in a rooted binary caterpillar there is exactly one internal vertex with $k$ descendant leaves for each $k\in\{2,\dots,n\}$.

    It remains to consider antichains consisting entirely of leaves. Such an antichain induces full monophyly only if every leaf is assigned a distinct trait, giving
    \[
    (c)_n
    \]
    possible characters.

    Hence 
    \begin{align*}
    fm_N(T^{cat}_n,c)
    & =
    \sum_{k=2}^{n} (c)_{n-k+1}
    +
    (c)_n \\
    &= \sum_{k=1}^n (c)_{n-k+1}.    
    \end{align*}
    
    As usual, we adopt the convention that $(x)_k=0$ whenever $x<k$. This completes the proof. 
\end{proof}

\subsection{Calculating the monophyly numbers for fully balanced trees}

In the following, given a phylogenetic tree $T$, we refer to a (relevant) antichain $A$ of size $k$ as a \emph{(relevant) $k$-antichain}. Furthermore, we say that such a  (relevant) $k$-antichain $A$ \emph{covers $m$ leaves} if $s(A) = m$. Note that this notion depends only on the topology of the tree and not on the labeling of the leaves.

We begin by counting (relevant) $k$-antichains in $T_h^{fb}$ that cover exactly $m$ leaves. Both types satisfy the same recursion, differing only in their base cases: relevant $k$-antichains consist solely of internal vertices, whereas general $k$-antichains may also contain leaves. In the following statements, $\boldsymbol{1}_{\{\text{conditions}\}}$ denotes the indicator function, which takes the value $1$ if the conditions are met, and is 0 otherwise.

\begin{lemma} \label{lem:antichaincover-fb}
    Let $h,k \in \mathbb{N}_{\geq 0}$. Let $a(h,k,m)$ denote the number of $k$-antichains in $T_h^{fb}$ that cover exactly $m$ leaves, and let $b(h,k,m)$ denote the number of relevant $k$-antichains in $T_h^{fb}$ that cover exactly $m$ leaves. Set
    \[ a(h,k,m)= b(h,k,m)=0  \text{ whenever }  m>2^h.\] 
    Then the following recursions hold:
    \begin{enumerate}
        \item For $a(h,k,m)$, we have
           \[ a(0,0,0)=1, \quad a(0,1,1) = 1, \quad \text{and } a(0,k,m)=0 \text{ otherwise}.\] For $h \geq 1$, 
         \[ a(h,k,m) = \mathbf{1}_{\{k=1,m=2^h\}} + \sum\limits_{i=0}^{k} \sum\limits_{m_1=0}^{2^{h-1}} a(h-1,i,m_1) \cdot a(h-1,k-i,m-m_1).\]
        \item For $b(h,k,m)$, we have
          \[ b(0,0,0)=1, \quad \text{and } b(0,k,m) = 0 \text{ for } (k,m) \neq (0,0).\] For $h \geq 1$, 
         \[ b(h,k,m) = \mathbf{1}_{\{k=1,m=2^h\}} + \sum\limits_{i=0}^{k} \sum\limits_{m_1=0}^{2^{h-1}} b(h-1,i,m_1) \cdot b(h-1,k-i,m-m_1).\]
    \end{enumerate}
\end{lemma}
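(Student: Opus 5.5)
We prove both recursions with the same argument, splitting on whether the root lies in the antichain. Note first that the statement allows $k=0$, so the empty set is counted as the unique $0$-antichain covering $0$ leaves. This convention is what makes the base cases $a(0,0,0)=b(0,0,0)=1$ consistent and lets the convolution handle antichains that lie entirely in one subtree.

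\emph{Base cases ($h=0$).} Here $T_0^{fb}$ is a single vertex, which is a leaf. Its antichains are the empty set, with $k=0$ and $m=0$, and $\{\rho\}$, with $k=1$ and $m=1$. The set $\{\rho\}$ is not relevant because $\rho$ is a leaf. This gives exactly the stated base values for $a$ and $b$.

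\emph{Step $h\ge1$.} Write $T_h^{fb}=(T',T'')$, where both maximal pendant subtrees are copies of $T_{h-1}^{fb}$ with $2^{h-1}$ leaves each. Let $A$ be an antichain of $T_h^{fb}$. There are two cases.

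\textbf{Case 1: $\rho\in A$.} Every other vertex is a descendant of $\rho$, so $A=\{\rho\}$. This antichain has $k=1$ and $s(A)=2^h$. The root is internal for $h\ge1$, so $\{\rho\}$ is relevant as well. This case accounts for the indicator term $\mathbf{1}_{\{k=1,m=2^h\}}$ in both recursions.

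\textbf{Case 2: $\rho\notin A$.} Then $A$ splits uniquely as $A=A'\cup A''$ with $A'=A\cap V(T')$ and $A''=A\cap V(T'')$, where either part may be empty. Two facts make this a bijection with pairs $(A',A'')$ of (possibly empty) antichains of $T'$ and $T''$:
\begin{itemize}
\item No vertex of $T'$ is comparable with a vertex of $T''$, since their only common ancestor is $\rho$.
\item The ancestor relation restricted to $T'$ is the ancestor relation of $T'$ itself.
\end{itemize}
Together these show that $A$ is an antichain if and only if $A'$ and $A''$ are antichains. Moreover, $|A|=|A'|+|A''|$, and $s(A)=s(A')+s(A'')$ because the leaf sets of the two subtrees are disjoint. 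A vertex is a leaf of $T_h^{fb}$ if and only if it is a leaf of the subtree containing it, so relevance is also preserved in both directions.

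Counting the pairs with $|A'|=i$, $|A''|=k-i$ and $s(A')=m_1$, $s(A'')=m-m_1$ gives the double sum. The index $i$ runs from $0$ to $k$. The index $m_1$ runs from $0$ to $2^{h-1}$, since $s(A')\le 2^{h-1}$. Values of $m-m_1$ that are negative or exceed $2^{h-1}$ contribute zero, by the convention $a=b=0$ for $m>2^h$ (and the implicit convention that they vanish for negative $m$).

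\emph{Main obstacle.} The only delicate point is bookkeeping rather than mathematics. One must make sure the empty antichain is permitted on each side, so that antichains contained entirely in one subtree are counted, and that $\{\rho\}$ is not double-counted. The first is handled by $a(\cdot,0,0)=b(\cdot,0,0)=1$: this holds at $h=0$ by the base case and propagates to all $h$ through the $i=0$, $m_1=0$ term. The second holds because the convolution never produces $\rho$.
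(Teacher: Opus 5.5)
Your proof is correct and follows the same route as the paper's. You read off the base case $h=0$ directly, and for $h\ge 1$ you split on whether the root lies in the antichain: $\{\rho\}$ gives the indicator term, and every other antichain decomposes as a pair of (possibly empty) antichains in the two copies of $T_{h-1}^{fb}$, which gives the convolution. Your version is slightly more careful than the paper's in three places: you check that every such pair recombines to an antichain, you check that relevance is preserved, and you state the zero conventions for out-of-range indices.
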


Before we prove the lemma, we present an example. 

\begin{example}\label{ex:fully-balanced-antichains}
    Consider $T_3^{fb}$, the fully balanced tree of height three. Then, we obtain the following non-zero counts for $a(3,k,m)$ and corresponding counts for $b(3,k,m)$, listing only those for $1 \leq k \leq 3$:
    \begin{center}
        \begin{tabular}{c|c|c|c}
          $k$ & $m$ & $a(3,k,m)$ & $b(3,k,m)$ \\ \hline
          1 & 1 & 8 & 0\\
          1 & 2 & 4 & 4\\
          1 & 4 & 2 & 2 \\
          1 & 8 & 1 & 1 \\
          2 & 2 & 28 & 0\\
          2 & 3 & 24 & 0\\
          2 & 4 & 6 & 6 \\
          2 & 5 & 8 & 0\\
          2 & 6 & 4 & 4 \\
          2 & 8 & 1 & 1 \\
          3 & 3 & 56 & 0 \\
          3 & 4 & 60 & 0\\
          3 & 5 & 24 & 0 \\
          3 & 6 & 16 & 4 \\
          3 & 7 & 8 & 0\\
          3 & 8 & 2 & 2 \\
        \end{tabular}
    \end{center}
\end{example}

Now we are in a position to prove Lemma~\ref{lem:antichaincover-fb}.

\begin{proof}[Proof of Lemma~\ref{lem:antichaincover-fb}]
We prove the recursion by induction on $h$. The argument is identical for $a(h,k,m)$ and $b(h,k,m)$, except for the differing base cases.

For $h=0$, the statement follows directly from the definitions. In particular, $T_0^{fb}$ consists of a single leaf, so the only admissible antichains are:
    \begin{itemize}
        \item the empty antichain, yielding $a(0,0,0)=b(0,0,0)=1$, and
        \item in the case of $a(h,k,m)$, the single-element antichain consisting of the leaf, yielding $a(0,1,1)=1$.
    \end{itemize}

Now assume the recursion holds for height $h-1$, and consider $T_h^{fb}$ with root $r$. The two maximal pendant subtrees of $T_h^{fb}$ are copies of $T_{h-1}^{fb}$.

Let $A$ be a $k$-element (relevant) antichain in $T_h^{fb}$ that covers exactly $m$ leaves. Two mutually exclusive cases can occur.
\begin{itemize}
    \item If $\rho \in A$, then no descendant of $\rho$ can belong to the (relevant) antichain. Hence $A = \{\rho\}$. The root covers all leaves of the tree, which are exactly $2^h$. Thus this case occurs only when $k=1$ and $m=2^h$. Therefore it contributes $\mathbf{1}_{\{k=1,m=2^h\}}$.

    \item If $\rho \notin A$, then every element of $A$ lies in one of the two maximal pendant subtrees of $T_h^{fb}$. Since these subtrees are disjoint, the (relevant) antichain splits uniquely into $i$ inner vertices in one subtree, say the left subtree, and $k-i$ inner vertices in the right subtree for some $0 \leq i \leq k$. Let $m_1$ be the number of leaves covered by the vertices in the left subtree, and let $m-m_1$ be the number of leaves covered by the vertices in the right subtree. 
    
    By the inductive hypothesis, there are  
    \[ d(h-1,i,m_1) \cdot d(h-1,k-i,m-m_1)\]
    possibilities (where $d(\cdot, \cdot, \cdot)$ denotes either $a(\cdot, \cdot, \cdot)$ or $b(\cdot, \cdot, \cdot)$).

    Summing over all $i$ and $m_1$ gives
    \[ \sum\limits_{i=0}^{k} \sum\limits_{m_1=0}^{2^{h-1}} d(h-1,i,m_1) \cdot d(h-1,k-i,m-m_1)\]
\end{itemize}
Combining the two cases gives the claimed recursion.
\end{proof}

We now use the preceding lemma and Theorem~\ref{thm:gen.theorem} to count relevant and non-relevant monophyly numbers for $T_h^{fb}$.

\begin{theorem}\label{thm_fbvalue}
Let $T_h^{fb}$ be a fully balanced tree of height $h \geq 1$. Let $c\in\mathbb{N}_{\geq 2}$ be a number of traits. Then, we have 
\begin{align*}
    m_N(T_h^{fb},c) &= \sum\limits_{k = 1}^{2^h} \sum\limits_{m =1}^{2^h} (-1)^{k+1} \, (c)_k \, a(h,k,m) \, (c-k)^{2^h-m} \quad \text{and}\\
    m(T_h^{fb},c)&=\sum\limits_{k =1}^{2^h} \sum\limits_{m =1}^{2^h} (-1)^{k+1} \, (c)_k \, b(h,k,m) \, (c-k)^{2^h-m}
\end{align*}
where $a(h,k,m)$ and $b(h,k,m)$ are computed as in Lemma~\ref{lem:antichaincover-fb}, and where we use the convention that $(-1)^{k+1} \, (c)_k \, a(h,k,m) \, (c-k)^{2^h-m} = 1 = (-1)^{k+1} \, (c)_k \, b(h,k,m) \, (c-k)^{2^h-m} = 1$ whenever both $c=k$ and $2^h=m$.
\end{theorem}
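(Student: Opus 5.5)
The plan is to derive both identities directly from Theorem~\ref{thm:gen.theorem} by regrouping its sum over antichains according to two parameters: the size $k=|A|$ and the number of covered leaves $m=s(A)$. For $T=T_h^{fb}$ we have $n=2^h$. Each summand $(-1)^{|A|+1}(c)_{|A|}(c-|A|)^{n-s(A)}$ in Theorem~\ref{thm:gen.theorem} depends on $A$ only through $(|A|,s(A))$. So all antichains with $|A|=k$ and $s(A)=m$ contribute equally, and their number is exactly $a(h,k,m)$ (for all antichains) or $b(h,k,m)$ (for relevant antichains), by the definitions in Lemma~\ref{lem:antichaincover-fb}. Replacing the sum over $\mathcal{A}$ by a double sum over $(k,m)$, weighted by these counts, gives the displayed formulas.

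The key steps, in order, are as follows. First, partition $\mathcal{A}$ into the classes $\mathcal{A}_{k,m}=\{A\in\mathcal{A}:|A|=k,\ s(A)=m\}$, and note that $|\mathcal{A}_{k,m}|$ equals $a(h,k,m)$, respectively $b(h,k,m)$. Lemma~\ref{lem:antichaincover-fb} provides the recursion by which these counts are computed.

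Second, check the summation ranges. Antichains are nonempty, so $k\ge 1$, and each element covers at least one leaf, so $m\ge 1$. Since the leaf sets $L(v)$, $v\in A$, are pairwise disjoint and nonempty, we have $k\le s(A)\le 2^h$, so $k$ and $m$ both range within $\{1,\dots,2^h\}$. Any pair $(k,m)$ in this box that is not realized has $a(h,k,m)=0$ (respectively $b(h,k,m)=0$) and contributes nothing. The empty antichain, counted by $a(h,0,0)=b(h,0,0)=1$, is excluded because the sums start at $k=1$.

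Third, carry over the convention. In Theorem~\ref{thm:gen.theorem}, $(c-|A|)^{n-s(A)}$ is set to $1$ when $c=|A|$ and $n=s(A)$, so the same convention is imposed on the factor $(c-k)^{2^h-m}$ when $c=k$ and $m=2^h$. The statement's phrasing of the convention, which sets the whole term to $1$, should be read this way: the factor $(c-k)^{2^h-m}$ becomes $1$, and the term is then $(-1)^{k+1}(c)_k\,a(h,k,m)$, respectively with $b$. Terms with $k>c$ vanish automatically because $(c)_k=0$.

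I do not expect a real obstacle. The only delicate points are the bookkeeping in the second and third steps: confirming that the index box covers every antichain, and that the $0^0$ convention is transferred consistently. The distinction between relevant and non-relevant monophyly is handled entirely by the choice of counting function, $b$ or $a$, exactly as in Theorem~\ref{thm:gen.theorem}.
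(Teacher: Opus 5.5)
Your proposal is correct and follows the paper's proof: both regroup the antichain sum of Theorem~\ref{thm:gen.theorem} by $(k,m)=(|A|,s(A))$, set $n=2^h$, and replace each class size by $a(h,k,m)$ or $b(h,k,m)$. You also read the $0^0$ convention correctly: only the factor $(c-k)^{2^h-m}$ is set to $1$, not the whole term, and this is the convention the paper's proof inherits from Theorem~\ref{thm:gen.theorem}.
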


\begin{proof}
Let $\mathcal{A}$ denote the set of antichains that exist in $T_h^{fb}$.
Recall from Theorem~\ref{thm:gen.theorem} that 
\[m_N(T_h^{fb},c) = \sum_{A \in \mathcal{A}} (-1)^{|A|+1} (c)_{|A|} (c-|A|)^{n-s(A)},\]
with the convention that $(c-|A|)^{n-s(A)}=1$ whenever $c-|A|=0$ and $n-s(A)=0$. 

Grouping the antichains by their size $k = |A|$ and their covered-leaf count $m = s(A)$ and noting that $n=2^h$, yields
\begin{align*}
    m_N(T_h^{fb},c) &= \sum\limits_{k =1}^{2^h} \sum\limits_{m =1}^{2^h} (-1)^{k+1} \, (c)_k \, a(h,k,m) \, (c-k)^{n-m},
\end{align*}
where $a(h,k,m)$ is defined as in Lemma~\ref{lem:antichaincover-fb}.

If $\mathcal{A}$ instead denotes the set of relevant antichains that exist in $T_h^{fb}$, the same argument applies with $a(h,k,m)$ replaced by $b(h,k,m)$. This yields
\begin{align*}
    m(T_h^{fb},c) &= \sum\limits_{k = 1}^{2^h} \sum\limits_{m = 1}^{2^h} (-1)^{k+1} \, (c)_k \, b(h,k,m) \, (c-k)^{2^h-m},
\end{align*}
where $b(h,k,m)$ is defined as in Lemma~\ref{lem:antichaincover-fb}.

\end{proof}

As Theorem~\ref{thm:gen.theorem} only covers (relevant) monophyletic characters but not fully (relevant) monophyletic characters, we cannot directly apply it to determine the remaining two monophyly counts, $fm(T_h^{fb},c)$ and $fm_N(T_h^{fb},c)$.

In the following, we first establish a formula for $fm(T_{h}^{fb},c)$, and then relate $fm(T_{h}^{fb},c)$ and $fm_N(T_{h-1}^{fb},c)$, which, taken together, allows us to also determine $fm_N(T_{h}^{fb},c)$.

\medskip
Recall that $b(h,k,2^h)$ (cf. Lemma \ref{lem:antichaincover-fb}) is the number of relevant $k$-antichains covering all leaves in $T^{fb}_h$. 

\begin{theorem}\label{thm:fully-balanced-fully-relevant}
    Let $T_h^{fb}$ be a fully balanced tree of height $h \geq 1$. Let $c \in \mathbb{N}_{\geq 2}$ be a number of traits. Then, we have
    \[ fm(T_h^{fb},c) =  \sum\limits_{k=1}^{2^{h-1}} b(h,k,2^h) \cdot (c)_k,\]
    where $b(h,k,2^h)$ is computed as in Lemma~\ref{lem:antichaincover-fb}.  
\end{theorem}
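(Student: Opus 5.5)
The plan is to set up a bijection between relevant fully monophyletic characters on $T_h^{fb}$ and pairs $(A,\phi)$. Here $A$ is a relevant antichain covering all $2^h$ leaves, and $\phi$ is an injective assignment of traits to the vertices of $A$. The count then follows from Lemma~\ref{lem:antichaincover-fb}.

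First, let $f$ be relevant fully monophyletic, and let $A_f$ be the set of internal vertices inducing relevant monophyly for $f$. By Corollary~\ref{c:antichain}, $A_f$ is a relevant antichain. A subtlety is that the definition of a relevant monophyletic group only requires $L(v)=Y$ to be monochromatic. The definition of ``inducing monophyly'' requires more, namely that the trait appears nowhere else. So I will take the definition of relevant full monophyly to mean that every leaf lies in $L(v)$ for some vertex $v$ inducing relevant monophyly. This matches the ``$S$ induces full relevant monophyly'' phrasing and the counts for $T_1$ in Table~\ref{tab:mono.nums}.

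Under this reading, every leaf lies in $L(v)$ for some $v\in A_f$, so $s(A_f)=2^h$. Each $v\in A_f$ induces monophyly for its own trait $a_v$, meaning exactly the leaves in $L(v)$ carry $a_v$. Hence distinct vertices of $A_f$ carry distinct traits, and $v\mapsto a_v$ is injective. Conversely, take a relevant antichain $A$ covering all leaves and an injective map $\phi:A\to C$. Define $f$ by $f(\ell)=\phi(v)$ for the unique $v\in A$ with $\ell\in L(v)$; uniqueness holds by disjointness of the sets $L(v)$. Then each $v\in A$ induces relevant monophyly, so $f$ is relevant fully monophyletic.

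To see that the two constructions are mutually inverse, I must check $A_f=A$. Every vertex of $A$ induces monophyly, so $A\subseteq A_f$. If some $w\in A_f\setminus A$ existed, then $L(w)$ would meet some $L(v)$ with $v\in A$, since $A$ covers all leaves. By the tree structure $w$ and $v$ are then comparable, contradicting Lemma~\ref{lem_ancdes}. So $A_f$ is exactly the covering antichain, and the map $f\mapsto(A_f,(a_v)_{v\in A_f})$ is a bijection.

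Counting the pairs then gives $\sum_k b(h,k,2^h)\,(c)_k$, since for each relevant $k$-antichain covering all leaves there are $(c)_k$ injective trait assignments. For the summation range, each vertex of a relevant antichain covers at least $2$ leaves, so $k\le 2^{h-1}$, and $b(h,k,2^h)=0$ for $k=0$. The term $(c)_k$ vanishes automatically when $k>c$, matching the convention. The step I expect to need the most care is the first one, the precise interpretation of ``relevant fully monophyletic'' as a covering by inducing vertices rather than merely monochromatic clusters, together with the identification $A_f=A$. The rest is direct bookkeeping.
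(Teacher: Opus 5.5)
Your proposal is correct and follows the same route as the paper. Both build a bijection between relevant fully monophyletic characters and pairs $(A,\phi)$, where $A$ is a relevant antichain covering all $2^h$ leaves and $\phi$ is an injective trait assignment, count these as $b(h,k,2^h)\cdot(c)_k$, and bound $k\le 2^{h-1}$. Your explicit check that $A_f=A$ (via Lemma~\ref{lem_ancdes}) and your reading of full monophyly in terms of vertices \emph{inducing} monophyly make precise the uniqueness claim and the definition that the paper's proof takes for granted.
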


\begin{proof}
Let $A = \{v_1, \ldots, v_k\}$ be a relevant $k$-antichain with $s(A) = V^1(T^{fb}_h)$. If $C$ is a set of traits with $|C|=c$ and $k \leq c$, then we may first choose an injective assignment of traits to the elements of $A$; that is, we select $k$ distinct traits from $C$ and assign one each to $v_i$. This can be done in $(c)_k$ ways.

Having fixed such an assignment, we then assign each leaf in $L(v_i)$ the trait that is assigned to $v_i$. Since $A$ is a relevant antichain covering all leaves, this yields a relevant fully monophyletic character on $T^{fb}_h$. By construction, this character uses exactly $k$ distinct states. 

Conversely, every relevant fully monophyletic character arises uniquely from such a pair $(A,\phi)$, where $A$ is a relevant $k$-antichain covering all leaves of $T^{fb}_h$ and $\phi$ is an injective assignment of $k$ traits from $C$ to the elements of $A$.

Therefore, the number of relevant fully monophyletic characters is given by
\[ fm(T_h^{fb},c) = \sum\limits_{k=1}^{2^{h-1}} b(h,k,2^h) \cdot (c)_k.\]
Notice that the upper bound in the sum follows from the fact that every relevant $k$-antichain that covers all leaves of $T^{fb}_h$ corresponds to a partition of the leaves into disjoint monophyletic groups, each of which must contain at least two leaves. Hence, $k \leq 2^{h-1}$.
\end{proof}

We now relate $fm(T_{h}^{fb},c)$ and $fm_N(T_{h-1}^{fb},c)$.

\begin{proposition} \label{prop:fb-fully-counts}
    Let $T_{h-1}^{fb}$ and $T_{h}^{fb}$ be fully balanced trees of heights $h-1$ and $h$, respectively, where $h \geq 2$. Let $c \in \mathbb{N}_{\geq 2}$ be a number of traits. Then,
    \[ fm(T_{h}^{fb},c) = fm_N(T_{h-1}^{fb},c).\]
\end{proposition}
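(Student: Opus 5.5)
The plan is to build an explicit bijection between relevant fully monophyletic characters on $T_h^{fb}$ and fully monophyletic characters on $T_{h-1}^{fb}$. Both trees are viewed through their structure of antichains covering all leaves. The key observation is that $T_{h-1}^{fb}$ is obtained from $T_h^{fb}$ by deleting all $2^h$ leaves. After this deletion, the $2^{h-1}$ cherry parents become the leaves of $T_{h-1}^{fb}$, and the ancestor relation among the surviving vertices is unchanged. Under this identification, the internal vertices of $T_h^{fb}$ correspond bijectively to all vertices of $T_{h-1}^{fb}$. Moreover, for an internal vertex $v$ with $|L_{T_h}(v)| = 2|L_{T_{h-1}}(v)|$, every leaf set in $T_h^{fb}$ is the disjoint union of the cherries below $v$.

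First, I would use the argument in the proof of Theorem~\ref{thm:fully-balanced-fully-relevant}. It shows that relevant fully monophyletic characters on $T_h^{fb}$ are in bijection with pairs $(A,\phi)$. Here $A$ is a relevant antichain covering all leaves, and $\phi$ is an injective trait assignment to $A$. The analogous statement for $fm_N$ on $T_{h-1}^{fb}$ should also be checked. Suppose $f$ is fully monophyletic. Every leaf lies in some monophyletic group, and so it lies in the leaf set of some vertex inducing monophyly. By Corollary~\ref{c:antichain}, these inducing vertices form an antichain $A$, which covers all leaves. Distinct vertices of $A$ carry distinct traits, because each vertex induces monophyly for its trait. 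So $f$ is determined by $(A,\phi)$ with $\phi$ injective. Conversely, any antichain $A$ covering all leaves, together with an injective $\phi$, yields a character in which each $v\in A$ induces monophyly. This assignment $f\mapsto (A,\phi)$ must be well defined and unique. Any vertex inducing monophyly belongs to this set $A$ by definition, so distinct pairs give distinct characters. Hence $fm_N(T_{h-1}^{fb},c)=\sum_k a(h-1,k,2^{h-1})(c)_k$.

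Second, I would show that the identification of vertices maps relevant antichains of $T_h^{fb}$ covering all leaves bijectively onto arbitrary antichains of $T_{h-1}^{fb}$ covering all leaves, preserving size $k$. Antichain-ness is preserved because the ancestor relation is preserved. Covering is preserved because $L_{T_h}(A)$ is the union of cherries over $L_{T_{h-1}}(A)$, and so one set is everything exactly when the other is. This gives $b(h,k,2^h)=a(h-1,k,2^{h-1})$. Alternatively, this equality can be checked from the recursions of Lemma~\ref{lem:antichaincover-fb} by induction on $h$, using the base case $b(1,k,2)=a(0,k,1)$. Combining this with Theorem~\ref{thm:fully-balanced-fully-relevant} gives the claim, since the $(c)_k$ factors match and terms with $k>c$ vanish on both sides.

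The main obstacle is the first step. It needs a careful proof that the decomposition into pairs $(A,\phi)$ is a genuine bijection in the non-relevant setting. In particular, a fully monophyletic character must have its inducing vertices covering every leaf, and each character must correspond to exactly one pair. Both points follow from the definition of inducing monophyly together with Lemma~\ref{lem_ancdes}.
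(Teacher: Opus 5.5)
Your argument is correct, but it takes a different route from the paper. The paper builds a direct bijection on characters. A fully monophyletic character on $T_{h-1}^{fb}$ is lifted to $T_h^{fb}$ by replacing each leaf with a cherry whose two leaves inherit its trait. Conversely, a relevant fully monophyletic character on $T_h^{fb}$ must be constant on every cherry, so contracting the cherries gives the inverse.

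You instead go through the antichain counts. You prove the non-relevant analogue of Theorem~\ref{thm:fully-balanced-fully-relevant}, namely $fm_N(T_{h-1}^{fb},c)=\sum_k a(h-1,k,2^{h-1})(c)_k$, via the correspondence $f\leftrightarrow (A,\phi)$. You then show $b(h,k,2^h)=a(h-1,k,2^{h-1})$ by identifying the internal vertices of $T_h^{fb}$ with all vertices of $T_{h-1}^{fb}$.

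Both proofs rest on the same fact: the clusters of size at least two in $T_h^{fb}$ are exactly the cherry blow-ups of the clusters of $T_{h-1}^{fb}$. The paper transports this at the level of characters, you at the level of covering antichains.

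The paper's route is shorter and does not need Theorem~\ref{thm:fully-balanced-fully-relevant}. Yours gives more:
\begin{itemize}
\item a refined identity that matches the two counts separately for each number $k$ of traits used;
\item an explicit formula for $fm_N$ in terms of the $a$-numbers, whereas the paper's subsequent corollary expresses it only through $b(h+1,k,2^{h+1})$;
\item a precise version of the step the paper leaves as ``the monophyletic structure is preserved''.
\end{itemize}
Your first step is not specific to fully balanced trees: it shows $fm_N(T,c)=\sum_A (c)_{|A|}$ over all antichains $A$ covering the leaves of an arbitrary rooted binary tree $T$.

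One wording slip: you mean that for every internal vertex $v$, $L_{T_h}(v)$ is the disjoint union of the cherries hanging below the elements of $L_{T_{h-1}}(v)$, hence $|L_{T_h}(v)|=2|L_{T_{h-1}}(v)|$.
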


\begin{proof}
    We prove this statement by establishing a bijection between fully monophyletic characters on $T_{h-1}^{fb}$ and relevant fully monophyletic characters for $T_{h}^{fb}$. 

    First, let $f$ be a fully monophyletic character on $T_{h-1}^{fb}$. Construct $T_h^{fb}$ by replacing each leaf of $T_{h-1}^{fb}$ with a cherry. Define a character $f'$ on $T_h^{fb}$ by assigning both leaves of each cherry the same trait as the original leaf. Since each leaf of $T_{h-1}^{fb}$ is replaced by two leaves with identical traits, every monophyletic group in $f'$ has size at least two. Moreover, the monophyletic structure is preserved under this transformation. Hence, $f'$ is a relevant fully monophyletic character on $T_h^{fb}$.

    Conversely, let $f$ be a relevant fully monophyletic character on $T_h^{fb}$. Then every leaf belongs to a monophyletic group of size at least two. In particular, if $[x,y]$ is a cherry of $T_h^{fb}$, then $f(x)=f(y)$. Now contract each cherry $[x,y]$ into a single leaf, denoted $xy$, thereby obtaining $T_{h-1}^{fb}$. Define a character $f'$ on $T_{h-1}^{fb}$ by setting $f'(xy)=f(x)=f(y)$. Since $f$ was fully monophyletic and the contraction preserves the induced subtree structure of each trait class, it follows that $f'$ is a fully monophyletic character on $T_{h-1}^{fb}$.

    It is straightforward to verify that these two constructions are inverse to each other. Therefore, we obtain a bijection between fully monophyletic characters on $T_{h-1}^{fb}$ and relevant fully monophyletic characters on $T_h^{fb}$, which completes the proof.
\end{proof}

A direct consequence of Theorem~\ref{thm:fully-balanced-fully-relevant} and Proposition~\ref{prop:fb-fully-counts} is the following corollary.

\begin{corollary}
    Let $T_h^{fb}$ be a fully balanced tree of height $h \geq 1$. Let $c \in \mathbb{N}_{\geq 2}$ be a number of traits. Then, we have
    \[ fm_N(T_h^{fb},c) = fm(T_{h+1}^{fb},c) = \sum\limits_{k=1}^{2^{h}} b(h+1,k,2^{h+1}) \cdot (c)_k,\]
    where $b(h+1,k,2^{h+1})$ is computed as in Lemma~\ref{lem:antichaincover-fb}.  
\end{corollary}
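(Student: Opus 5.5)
The corollary combines two earlier results, so the plan is a two-step chain of equalities.

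First, apply Proposition~\ref{prop:fb-fully-counts} with $h$ replaced by $h+1$. The hypothesis there is $h+1 \geq 2$, which holds precisely because $h \geq 1$. This gives
\[ fm(T_{h+1}^{fb},c) = fm_N(T_h^{fb},c), \]
which is the first equality.

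Second, apply Theorem~\ref{thm:fully-balanced-fully-relevant} to the tree $T_{h+1}^{fb}$. Its height $h+1 \geq 1$ is admissible, and $c \geq 2$ as required. This yields
\[ fm(T_{h+1}^{fb},c) = \sum_{k=1}^{2^{(h+1)-1}} b(h+1,k,2^{h+1}) \cdot (c)_k . \]
The upper summation limit simplifies to $2^{h}$, which is exactly the bound in the statement.

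Chaining the two equalities gives the claim. The only points to check are the index shift in the proposition (its $h$ becomes our $h+1$) and that the simplified upper limit $2^h$ matches the statement. No genuine obstacle is expected beyond this bookkeeping.

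It is worth noting the interpretation: $2^h$ is the number of leaves of $T_h^{fb}$. This is consistent with each relevant antichain of $T_{h+1}^{fb}$ covering all leaves corresponding, under the cherry-contraction bijection, to an antichain of $T_h^{fb}$ covering all leaves. Such an antichain has size at most $2^h$.
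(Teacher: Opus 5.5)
Your proposal is correct and follows the paper's route exactly: the paper states the corollary as a direct consequence of Proposition~\ref{prop:fb-fully-counts} with $h$ replaced by $h+1$ (admissible since $h\ge 1$), combined with Theorem~\ref{thm:fully-balanced-fully-relevant} applied at height $h+1$, whose upper summation limit $2^{(h+1)-1}=2^h$ matches the statement. The index-shift bookkeeping is the only thing to check, and you have handled it correctly.
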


\section{Calculating the monophyly numbers for binary characters}\label{s:bincha}
In this section, we calculate the monophyly numbers for general tree shapes with binary characters. Restricting to binary characters allows substantially sharper structural characterisations than in the general $c$-state setting.

We start with a simple corollary, which is a direct consequence of Theorem \ref{thm_catvalue}.

\begin{corollary} \label{cor_valuecat} Let $T_n^{cat}$ be a caterpillar tree with $n\geq 2$ leaves. Then, we have:  $m(T_n^{cat},2)=2n-2$.  \end{corollary}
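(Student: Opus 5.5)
This is a direct specialization of Theorem~\ref{thm_catvalue} to $c=2$. That theorem gives
\[
m(T_n^{cat},c)=\sum_{j=0}^{n-2} c\,(c-1)^j ,
\]
valid for every $c\ge 2$ and $n\ge 2$. Substituting $c=2$ makes each factor $(c-1)^j = 1^j = 1$, so every summand equals $2$. There are $n-1$ summands (indices $j=0,\dots,n-2$), hence
\[
m(T_n^{cat},2)=2(n-1)=2n-2 .
\]

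As a sanity check, I would explain the count combinatorially. By Lemma~\ref{lem:cat_anti}, every relevant antichain of $T_n^{cat}$ is a single internal vertex. So no inclusion--exclusion corrections arise: no character has two distinct vertices simultaneously inducing relevant monophyly.

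Now take the internal vertex $v$ with $j$ leaf descendants, $2\le j\le n$. With two traits, Lemma~\ref{l:single.v.count} gives exactly $2\cdot 1^{n-j}=2$ characters for which $v$ induces relevant monophyly. These are the character assigning one trait to $L(v)$ and the other trait to all remaining leaves, together with its complement.

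Each of the $n-1$ internal vertices therefore contributes exactly two characters, and these contributions are disjoint. This recovers $2n-2$. I also check it against Table~\ref{tab:mono.nums}: for $n=4$ the formula gives $6=m(T_2,2)$.

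There is no real obstacle here. The only points needing care are the number of summands ($n-1$, not $n$) and the observation that $(c-1)^j=1$ also holds for $j=0$.
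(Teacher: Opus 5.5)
Your proposal is correct and is exactly the paper's argument: substitute $c=2$ into Theorem~\ref{thm_catvalue}, observe that each summand is $2\cdot 1^j=2$, and count the $n-1$ summands to get $2n-2$. Your combinatorial sanity check is also accurate, though not needed: each of the $n-1$ internal vertices is its own relevant antichain contributing two characters, with no overlaps, and the value $6$ matches the caterpillar $T_2$ in Table~\ref{tab:mono.nums}.
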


\begin{proof} By Theorem \ref{thm_catvalue}, we have: $$m(T_n^{cat},2)=\sum\limits_{j=0}^{n-2}2\cdot(2-1)^{j}= \sum\limits_{j=0}^{n-2}2\cdot 1^{j}=2(n-1)=2n-2.$$ This completes the proof. \end{proof}

While Corollary \ref{cor_valuecat} provides an explicit value for $m(T_n^{cat},2)$, it does not take into account any other tree shapes. We will shortly prove a theorem that is a lot stronger. It shows that the value of the four types of monophyly numbers for $T_n^{cat}$ do not depend on any other property of the caterpillar than on the fact that the root has a leaf child, as all trees with this property share the same monophyly numbers with $T_n^{cat}$. Indeed, the non-relevant (full) monophyly number depends only on the number of leaves in the tree. However, first we must prove a lemma characterising certain antichains.

\begin{lemma}\label{lem_len2anti}
    Let $T$ be a rooted binary phylogenetic tree with $n\geq 2$ leaves. Then there is a unique antichain of size $2$ that covers all the leaves of $T$, consisting of the two child vertices of the root of $T$.
\end{lemma}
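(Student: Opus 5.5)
The plan is to prove existence first and then uniqueness. Write the root as $\rho$ and its two children as $u_1,u_2$; they exist because $T$ is binary with $n\geq 2$. For existence, neither $u_1$ nor $u_2$ lies on the path from $\rho$ to the other, so neither is an ancestor of the other. Hence $\{u_1,u_2\}$ is an antichain of size $2$. Every leaf is a descendant of $\rho$, and therefore of exactly one of $u_1$ or $u_2$. This gives $L(u_1)\cup L(u_2)=X$, so $s(\{u_1,u_2\})=|L(u_1)|+|L(u_2)|=n$, and the antichain covers all leaves.

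For uniqueness, let $A=\{v,w\}$ be any antichain with $L(v)\cup L(w)=X$. The first step is to rule out the root: if $\rho\in A$, the other element would be a descendant of $\rho$, contradicting the antichain property. So both $v$ and $w$ lie in the pendant subtrees $T_{u_1}$ or $T_{u_2}$.

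Next I show they cannot lie in the same subtree. If both were in $T_{u_1}$, then $L(v)\cup L(w)\subseteq L(u_1)$, which misses the nonempty set $L(u_2)$. So without loss of generality $v\in T_{u_1}$ and $w\in T_{u_2}$. Since the leaf sets of the two pendant subtrees are disjoint, covering forces $L(v)=L(u_1)$ and $L(w)=L(u_2)$.

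The final step, which I expect to be the only point needing care, is to conclude $v=u_1$ from $L(v)=L(u_1)$, where $v$ is a descendant of $u_1$. Suppose $v\neq u_1$. Then $v$ lies in the pendant subtree rooted at one child $x$ of $u_1$. In a rooted binary tree $u_1$ then has a second child $y$, and $L(y)$ is nonempty and disjoint from $L(x)\supseteq L(v)$. This contradicts $L(v)=L(u_1)$. If $u_1$ is a leaf, then $v=u_1$ trivially. The same argument gives $w=u_2$, so $A=\{u_1,u_2\}$, which proves uniqueness.
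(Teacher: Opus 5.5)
Your proof is correct and takes essentially the same route as the paper's. Both exclude the root via the antichain property and then use the two maximal pendant subtrees to show that any vertex other than $u_1,u_2$ leaves some leaf uncovered; the paper packages this as a single contradiction that would force the second vertex to be the root. Your final step also justifies, via the second child $y$ of $u_1$, the claim the paper only asserts: a proper descendant of $u_1$ misses some leaf of $T_{u_1}$.
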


\begin{proof}
 Denote the two maximal pendant subtrees of $T$ by $T_1$ and $T_2$. We now show that the only way to cover the leaves of $T$ using two vertices is by selecting $v_1$ and $v_2$ to be the roots of $T_1$ and $T_2$ in some order (in which there are of course two ways to do this). One can immediately observe that this selection covers the leaves of $T$, so all that remains is to show no other pair of vertices is possible. Neither vertex can be the root, as otherwise we could not have a size-$2$ antichain by Lemma \ref{lem_ancdes}. Now, seeking a contradiction, and without loss of generality, suppose $v_1$ was any other internal vertex aside from the roots of $T,T_1$ or $T_2$, but $v_1,v_2$ still cover the leaves of $T$. Then the maximal pendant subtree containing $v_1$ must have at least one leaf $\ell_1$ that is not a descendant of $v_1$ (as $v_1$ is not a root of either maximal pendant subtree). But any leaf $\ell_2$ selected from the other maximal pendant subtree is also not a descendant of $v_1$. This is a contradiction, as this would require $v_2$ to cover leaves in both maximal pendant subtrees and hence be the root, which we have already shown is not possible.

It follows that there is only one antichain of size $2$ that covers all leaves in $T$.
\end{proof}

By leveraging Lemma \ref{lem_len2anti}, we can completely determine all four types of monophyly counts for binary characters. Interestingly, for non-relevant (fully) monophyletic characters, the count depends only on the number of leaves, and for relevant (fully) monophyletic characters, it depends only on the number of leaves and whether the root has a leaf child.

\begin{theorem}\label{thm:binary.summary}
Let \(T=(T_1,T_2)\) be a rooted binary phylogenetic tree with
\(n\ge 2\) leaves, where \(T_1,T_2\) have \(n_1,n_2\) leaves respectively and \(n_1\ge n_2\). Then:

\begin{align*}
m(T,2) &=
\begin{cases}
2n-4,& n_2>1,\\
2n-2,& n_2=1,
\end{cases} \\
m_N(T,2)&=4n-4,\\
fm(T,2) &=
\begin{cases}
4,& n_2>1,\\
2,& n_2=1,
\end{cases} \qquad \text{and}\\
fm_N(T,2)&=4.
\end{align*}
\end{theorem}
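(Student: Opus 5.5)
The plan is to specialise Theorem~\ref{thm:gen.theorem} to $c=2$. For $c=2$ the factor $(2)_{|A|}$ vanishes whenever $|A|\ge 3$, so only antichains of size $1$ and $2$ contribute. Moreover, for $|A|=2$ the factor $(2-2)^{n-s(A)}$ is nonzero only when $s(A)=n$, in which case the convention makes it $1$. By Lemma~\ref{lem_len2anti} there is exactly one such antichain, namely $\{\rho_1,\rho_2\}$, the two children of the root, and it contributes $-(2)_2=-2$. Each singleton $\{v\}$ contributes $2\cdot 1^{n-|L(v)|}=2$ by Lemma~\ref{l:single.v.count}. Hence
\[ m_N(T,2) = 2\,|V(T)| - 2, \qquad m(T,2)=2\,|\mathring{V}(T)| - 2\cdot\mathbf{1}_{\{\{\rho_1,\rho_2\}\text{ relevant}\}}. \]
A rooted binary tree has $2n-1$ vertices and $n-1$ inner vertices, so $m_N(T,2)=4n-4$. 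The pair $\{\rho_1,\rho_2\}$ is a relevant antichain exactly when both children are internal, i.e.\ when $n_2>1$ (recall $n_1\ge n_2$, and $n_1=1$ forces $n=2$, $n_2=1$). This gives $m(T,2)=2n-4$ if $n_2>1$ and $2n-2$ if $n_2=1$.

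For the full numbers I will argue directly. When $n_2=1$, Theorem~\ref{thm:full.mono} already gives $fm(T,2)=2$. Otherwise, I will characterise the binary fully monophyletic characters. A constant character is fully monophyletic via the root, which gives $2$ characters. If $f$ uses both traits, then every leaf lies in a monophyletic group, so each trait class is a union of sets $L(v)$. However, two disjoint vertices cannot both induce monophyly for the same trait, since inducing monophyly means the class equals $L(v)$. So each trait class is exactly $L(v)$ for a single vertex, and the two vertices form an antichain of size $2$ covering all leaves. By Lemma~\ref{lem_len2anti} these vertices must be $\rho_1,\rho_2$, which gives the $2$ characters assigning $\alpha$ to $L(\rho_1)$ and $\beta$ to $L(\rho_2)$ or vice versa. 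Hence $fm_N(T,2)=4$.

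For the relevant version, the same four characters remain relevant precisely when both $|L(\rho_i)|\ge 2$, i.e.\ $n_2>1$, giving $fm(T,2)=4$. If $n_2=1$, the two non-constant characters fail, consistent with the value $2$.

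The main subtlety is the claim that a trait class in a fully monophyletic character equals a single $L(v)$ rather than a union of several such sets. I would handle this carefully from the definition: a monophyletic group $Y=L(v)$ with constant trait $a$ is only relevant through $v$ inducing monophyly, that is, through the "if and only if" in the definition of inducing monophyly. There is a mismatch between this and the paper's Definition of a monophyletic group, which only requires the leaves of $Y$ to share a trait. I will read full monophyly through the inducing notion, as Corollary~\ref{c:antichain} and Table~\ref{tab:mono.nums} implicitly do; for instance, $\alpha\alpha\alpha\beta$ is marked as not fully monophyletic on $T_1$.
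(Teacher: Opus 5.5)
Your proposal is correct and follows the paper's proof essentially step by step. You specialise Theorem~\ref{thm:gen.theorem} to $c=2$, keep only the singletons and the unique covering pair of root children from Lemma~\ref{lem_len2anti}, and then restrict to covering (relevant) antichains for the full numbers. Your argument that each trait class of a fully monophyletic binary character is a single $L(v)$, and your decision to read monophyletic groups through the ``inducing monophyly'' definition, spell out a step the paper leaves implicit, but they do not change the route.
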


\begin{proof}
By Theorem \ref{thm:gen.theorem} and the convention $(x)_k=0$ for $x<k$,
only antichains of size at most $2$ contribute when $c=2$.

Moreover, if $|A|=2$, then the term
\[
(c-|A|)^{n-s(A)}
\]
is non-zero only when $s(A)=n$, that is, when the antichain covers all leaves of $T$.

By Lemma \ref{lem_len2anti}, there is a unique antichain of size $2$ that covers all leaves, namely the set consisting of the two children of the root.

Now, each type of monophyly count differs only by which antichains are admissible

\begin{itemize}
\item non-relevant monophyly: all antichains;
\item relevant monophyly: relevant antichains, i.e., antichains that contain no leaves;
\item full monophyly: antichains that cover all leaves;
\item relevant full monophyly: relevant antichains that cover all leaves.
\end{itemize}

We first consider non-relevant monophyly. For non-relevant monophyly, every vertex of $T$ may form an antichain of size $1$. Since a rooted binary tree with $n$ leaves has $2n-1$ vertices, these antichains contribute
\[
2(2n-1)=4n-2.
\]
The unique covering antichain of size $2$ contributes
\[
-(2)_2= -2,
\]
and hence
\[
m_N(T,2)=4n-4.\]

We now consider non-relevant full monophyly. In this case, we must restrict the size-$1$ antichains to also specifically be those that cover all leaves of the tree, which means the only permitted size-$1$ antichain consists of the root. There are two possible trait assignments for descendants of the root, which together with the two possible assignments for the antichain of size $2$ from Lemma \ref{lem_len2anti} gives us that $fm_N(T,2)=4$.

For relevant monophyly, only internal vertices may appear in antichains. Since $T$ has $n-1$ internal vertices, antichains of size $1$ contribute
\[
2(n-1)=2n-2.
\] 

Again, by Lemma \ref{lem_len2anti}, there is only one possible size-$2$ antichain. However, if $n_2=1$, one of the vertices in this antichain will be a leaf, so this is not a permissible antichain. Hence if $n_2=1$, $m(T,2)=2n-2$. However, if $n_2>1$, it is permissible, and so by Theorem \ref{thm:gen.theorem}, we must subtract the two possible trait assignments, and in this case $m(T,2)=2n-4$.

Finally, we consider relevant full monophyly. In this case we are restricted to only antichains that cover the leaves of $T$, and furthermore do not contain leaves. There are therefore two possible antichains -- the size-$1$ antichain containing only the root, and the size-$2$ antichain consisting of the two children of the root. If $n_2=1$, then the latter contains a leaf, and hence the only valid relevant fully monophyletic characters are those that assign all descendants of the root the same character, of which there are $2$. Hence if $n_2=1$, we obtain $fm(T,2)=2$. If, however, $n_2>1$, the size-$2$ antichain is valid, with a further two assignments, and thus $fm(T,2)=4$.

This proves the theorem.
\end{proof}

In the following section, we use our combinatorial results to analyse the expected values of the various monophyletic character counts under different models. 

\section{Expectations of monophyly}\label{s:expmon}

Phylogenetic trees can be generated under different models. Thus far, we have characterised all four monophyly numbers for trees on binary characters. We will now consider their expected values under two simple models. The first one is the \emph{uniform model}, which assigns the same probability $p$ to all rooted binary phylogenetic trees on $n$ leaves. As there are $(2n-3)!!$ many such trees \cite{Semple2003}, this gives $p=\frac{1}{(2n-3)!!}$. 

The other model we consider is the \emph{Yule model}. This model starts with a single vertex, and repeats the following step $n-1$ times: from all leaves currently present in the tree, choose one uniformly at random and replace it by a cherry. Finally, leaf labels are assigned to the resulting tree uniformly at random from the set of labels, e.g., $X=\{1,\ldots,n\}$. 

Theorem \ref{thm:binary.summary} gives us a simple way to calculate the expected values of $m(T,2)$ and $fm(T,2)$ both under the uniform and the Yule model. We do not consider $m_N(T,2)$ or $fm_N(T,2)$, as they are constant for fixed $n$. We start with the uniform model.

\begin{proposition}\label{cor_binaryuniform} 
Let $n \geq 2$ and let $T_n$ be a phylogenetic tree with $n$ leaves sampled under the uniform model. Then, 
$$E_{unif}[m(T_n,2)]=2n-3+\frac{3}{2n-3}.$$
and 
$$E_{unif}[fm(T_n,2)]=3 - \frac{3}{2n-3}.$$
\end{proposition}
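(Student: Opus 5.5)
By Theorem~\ref{thm:binary.summary}, both $m(T,2)$ and $fm(T,2)$ depend only on $n$ and on whether the root of $T$ has a leaf child, i.e.\ whether $n_2=1$. Let $p_n$ denote the probability, under the uniform model, that a random tree $T_n$ has a leaf child of the root. Linearity of expectation gives
\[
E_{unif}[m(T_n,2)] = (2n-4)(1-p_n) + (2n-2)p_n = 2n-4+2p_n,
\]
and
\[
E_{unif}[fm(T_n,2)] = 4(1-p_n)+2p_n = 4-2p_n .
\]
So the proof reduces to showing $p_n = \frac{1}{2}+\frac{3}{2(2n-3)}$. Indeed, $2n-4+1+\frac{3}{2n-3}=2n-3+\frac{3}{2n-3}$ and $4-1-\frac{3}{2n-3}=3-\frac{3}{2n-3}$, which match the statement.

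The main step is computing $p_n$ by counting the rooted binary phylogenetic trees on $X=\{1,\dots,n\}$ whose root has a leaf child. For $n\ge 3$ at most one child of the root can be a leaf. Such a tree is determined by choosing the leaf label ($n$ ways) and an arbitrary rooted binary phylogenetic tree on the remaining $n-1$ labels ($(2n-5)!!$ ways). This gives $n(2n-5)!!$ trees, and dividing by $(2n-3)!!$ yields
\[
p_n=\frac{n}{2n-3}.
\]
Writing $n=\frac{1}{2}(2n-3)+\frac{3}{2}$ gives $p_n=\frac12+\frac{3}{2(2n-3)}$, as required.

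The boundary case $n=2$ is the only real obstacle. There the unique tree is a cherry, so both root children are leaves and $p_2=1$. The counting formula agrees, since $2\cdot(-1)!!/1!!=2$ would double count; instead I check directly: $n_2=1$, so $m=2$ and $fm=2$. The claimed formulas give $1+3=4$ and $3-3=0$, which do not match. I therefore state the computation for $n\ge 3$ and treat $n=2$ separately, noting the degenerate case, or interpret the proposition for $n\ge3$. This convention check at small $n$ is the only delicate point; the rest is the counting argument above.
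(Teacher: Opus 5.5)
Your proposal is correct and follows the paper's own argument. You count the $n(2n-5)!!$ trees whose root has a leaf child, obtain $p_n=n/(2n-3)$, and average the two values from Theorem~\ref{thm:binary.summary}.

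Your check of $n=2$ is also right, and it exposes a genuine flaw in the statement that the paper's proof overlooks. At $n=2$ the count $n(2n-5)!!=2$ double-counts the cherry, which gives the impossible ``probability'' $n/(2n-3)=2$. The true values there are $m(T_2,2)=fm(T_2,2)=2$ rather than $4$ and $0$, so the proposition holds only for $n\ge 3$. Your phrase ``the counting formula agrees'' should say that it does \emph{not} agree.
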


\begin{proof}
There are $(2n-3)!!$ many rooted binary phylogenetic $X$-trees \cite{Semple2003}. Of these trees, there are $n\cdot (2n-5)!!$ many that have a singleton leaf adjacent to the root. This can be seen by choosing one of the $n$ leaves for the role of this singleton leaf and choosing one of the  $(2(n-1)-3)!!=(2n-5)!!$ many trees on the remaining taxa as the second maximal pendant subtree. In the following, whenever a tree is such that one of its leaves is adjacent to the root, we say it has property $\pi$. So the probability that $T_n$ has property $\pi$ is $P(T_n:\pi) = \frac{n\cdot (2n-5)!!}{(2n-3)!!}=\frac{n}{2n-3} $.
Moreover, for all trees $T$ that have property $\pi$, we have $m(T,2)=2n-2$ and $fm(T,2)=2$ by Theorem \ref{thm:binary.summary}. 
Additionally, using the complement, we get that the probability that $T_n$ does \emph{not} have property $\pi$ equals $P(T_n: \bar{\pi})=1- \frac{n}{2n-3} = \frac{n-3}{2n-3}$, and for trees $T$ that do not have property $\pi$, we have $m(T,2)=2n-4$ and $fm(T,2)=2$ by Theorem \ref{thm:binary.summary}.

In summary, this leads to:

\begin{align*}
  E_{unif}[m(T_n,2)]&=  \frac{n}{2n-3} \cdot (2n-2) +  \frac{n-3}{2n-3} \cdot (2n-4)\\
  &= 2n-3+\frac{3}{2n-3},
\end{align*}

and

\begin{align*}
  E_{unif}[fm(T_n,2)]&=  \frac{n}{2n-3} \cdot 2 +  \frac{n-3}{2n-3} \cdot 4\\
  &= 3 - \frac{3}{2n-3}.
\end{align*}

This completes the proof. 

\end{proof}

Next, we consider the Yule model.

\begin{proposition}\label{cor_binaryYule} 
Let $n \geq 2$ and let $T_n$ be a phylogenetic tree with $n$ leaves sampled under the Yule model. Then,
$$E_{Yule}[m(T_n,2)]=(2n-4)+2\cdot  \prod\limits_{i=3}^{n-1}\frac{i-1}{i},$$

and 

$$E_{Yule}[fm(T_n,2)]=4-2\cdot  \prod\limits_{i=3}^{n-1}\frac{i-1}{i}.$$
\end{proposition}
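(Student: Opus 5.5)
By Theorem~\ref{thm:binary.summary}, both $m(T,2)$ and $fm(T,2)$ depend on $T$ only through whether the root has a leaf child (property $\pi$). Writing $p_n = P_{Yule}(T_n \text{ has } \pi)$, linearity of expectation gives
\[
E_{Yule}[m(T_n,2)] = (2n-2)p_n + (2n-4)(1-p_n) = (2n-4) + 2p_n
\]
and
\[
E_{Yule}[fm(T_n,2)] = 2p_n + 4(1-p_n) = 4 - 2p_n .
\]
So the whole task reduces to showing
\[
p_n = \prod_{i=3}^{n-1}\frac{i-1}{i},
\]
with the empty product equal to $1$ for $n=2,3$.

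To compute $p_n$, I will track the Yule process step by step. After the first split we have two leaves, namely the two children of the root, each a leaf. The tree fails property $\pi$ at the end exactly when both children of the root eventually get split. Once a root child has been split it never becomes a leaf again, so property $\pi$ at time $n$ means that at least one root child is still a leaf.

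For $n=2$ both children are leaves, and for $n=3$ only one can have been split, so $p_2=p_3=1$. At the step going from $3$ to $4$ leaves, one root child has already been split and the other is still a leaf. The tree then has $3$ leaves, and the remaining root-child leaf survives this step with probability $\frac{2}{3}$. In general, going from $i$ to $i+1$ leaves, the surviving root-child leaf is avoided with probability $\frac{i-1}{i}$.

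The subtle point is the case where both root children are still leaves. This only happens when $i=2$: after the first non-root split, exactly one root child is split. So the surviving leaf child is uniquely determined from $i=3$ onward. Conditioning successively then yields
\[
p_n = \prod_{i=3}^{n-1}\frac{i-1}{i},
\]
and I will note that this telescopes to $2/(n-1)$. Substituting back gives both formulas.

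The main obstacle is justifying this conditioning cleanly, in particular that at stage $2\to 3$ the split necessarily hits a root child and so costs no probability. Uniform leaf labeling is irrelevant here, since $\pi$ is a shape property.
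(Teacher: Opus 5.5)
Your proposal is correct and follows essentially the same argument as the paper. You reduce both expectations via Theorem~\ref{thm:binary.summary} to the probability that the root has a leaf child, and you obtain that probability by noting that from $i=3$ leaves onward the unique remaining leaf child of the root must be avoided at each step, which happens with probability $\frac{i-1}{i}$. Your side remark needs one caveat: the product telescopes to $\frac{2}{n-1}$ only for $n\ge 3$, since for $n=2$ the empty product is $1$ rather than $2$, so the closed form should not be used there.
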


\begin{proof} We first calculate the probability $P(T_n:\pi)$ that a tree $T_n$ with $n$ leaves sampled under the Yule model has one leaf adjacent to the root, which we again refer to as having property $\pi$. We have $P(T_n:\pi)=\prod\limits_{i=3}^{n-1}\frac{i-1}{i}$, which can be seen as follows: For $n=2$ or $n=3$, there is only one tree shape, and this tree shape has property $\pi$, so all phylogenetic trees with $n=2$ or $n=3$ have said property. In these cases, the product is empty, which leads to the desired probability of 1. If $n>3$, however, the unique single leaf present in the tree that is adjacent to the root needs to be avoided in each step; all other leaves can be selected for speciation, i.e., for splitting into two lineages. This means that in each step, starting with $i=3$ leaves, there are $i-1$ leaves that can be chosen to split, whereas one specific leaf is forbidden. This leads to a probability of $\frac{i-1}{i}$ in each step, giving the above product as the total probability of generating a tree with $n$ leaves while fulfilling property $\pi$. Now, by Theorem \ref{thm:binary.summary}, all trees $T$ that fulfill property $\pi$ have $m(T,2)=2n-2$ and $fm(T,2)=2$. 

On the other hand, the probability that $T_n$ does \emph{not} have property $\pi$ is simply $1-P(T_n : \pi)=1-\prod\limits_{i=3}^{n-1}\frac{i-1}{i}$. In this case, again by Theorem \ref{thm:binary.summary}, the corresponding trees have $m(T,2)=2n-4$ and $fm(T,2)=4$.

In summary, this leads to the following expected values: 

\begin{align*}
  E_{Yule}[m(T_n,2)]&=  \prod\limits_{i=3}^{n-1}\frac{i-1}{i} \cdot (2n-2) +  \left(1-\prod\limits_{i=3}^{n-1}\frac{i-1}{i}\right) \cdot (2n-4)\\
  &=(2n-4)+((2n-2)-(2n-4))\cdot  \prod\limits_{i=3}^{n-1}\frac{i-1}{i}\\
  &=(2n-4)+2\cdot  \prod\limits_{i=3}^{n-1}\frac{i-1}{i},
\end{align*}

and

\begin{align*}
  E_{Yule}[fm(T_n,2)]&=  \prod\limits_{i=3}^{n-1}\frac{i-1}{i} \cdot 2 +  \left(1-\prod\limits_{i=3}^{n-1}\frac{i-1}{i}\right) \cdot 4\\
  &=4-2\cdot  \prod\limits_{i=3}^{n-1}\frac{i-1}{i},
\end{align*}

which completes the proof.

\end{proof}

We now turn our attention to the famous tree reconstruction criterion Maximum Parsimony, which turns out to be closely related to some of the monophyly concepts. 

\section{Connections of monophyletic characters to parsimony}\label{s:pars}

In this section, it is our aim to state relations between monophyly and the established parsimony concept. 

Note that characters only act on leaves of the tree, or, by means of the label set, on the present-day species assigned to them. In order to investigate monophyletic groups, we can also consider assignments of states to inner vertices. In this regard, we consider \emph{extensions} of characters. An extension $g_f$ of $f$ is a function $g_f:V(T)\rightarrow C$ with ${g_f}_{\vert X}=f$. In other words: $g_f$ assigns traits to \emph{all} vertices of $T$, not just to the leaves, but it agrees with $f$ on the leaves.

In this section, we will relate monophyletic characters to the well-known \emph{parsimony criterion}, which is often used in mathematical phylogenetics to reconstruct ancestral traits from character data. In this regard, for a given tree $T$ and character $f$, we call $ch(g_f) = \vert \{ \{u,v\} \in E, \, g_f(u) \neq g_f(v)\} \vert$ the \emph{changing number} or \emph{substitution number} of an extension $g_f$ on $T$.

The main concept needed for ancestral state reconstruction is \emph{maximum parsimony (MP)}, which is based on the so-called \emph{parsimony score} $l(f,T)$ of a character $f$ on a tree $T$. Here, $l(f,T) = \min\limits_{g_f} ch(g_f,T)$, where the minimum runs over all extensions $g_f$ of $f$ on $T$. 

Note that given a tree $T$ and a character $f$, the parsimony score can be efficiently calculated in linear time using the well-known \emph{Fitch algorithm} \cite{Fitch,Hartigan1973}.
This algorithm assigns a set of states to all inner vertices and minimizes the required number of changes. It is based on Fitch's parsimony operation which we explain now. Therefore, let $C$ be a non-empty finite set of character states or traits and let $A,B \subseteq C$. Then, Fitch's parsimony operation $*$ is defined by:
$$A*B \coloneqq \begin{cases}
A \cap B, & \text{if } A \cap B \neq \emptyset, \\
A \cup B, & \text{otherwise.}
\end{cases}$$

Using this operation, the Fitch algorithm works as follows. Given a tree $T$ and  character $f$, in the initialisation step, the algorithm assigns a set to each leaf containing precisely the state assigned to this leaf by $f$. For instance, if $f(1)=\alpha$, leaf $1$ gets assigned set $S(f,T,1)=\{\alpha\}$. The algorithm then proceeds as follows: In each step, it considers all vertices $v$ whose two children have already been assigned a set, say $A$ and $B$. Then, $v$ is assigned the set $S(f,T,v):=A*B$. This step is continued \enquote{upwards} (i.e., from the leaves to the root) along the tree until the root $\rho$ is assigned a set, which is denoted by $S(f,T,\rho)$. Note that throughout this section, when we want to describe the set $S$ assigned to a vertex $v$, by a slight abuse of notation, we often write $S(v)$ instead of $S(f,T,v)$ whenever there is no ambiguity. Ultimately, the parsimony score $l(f,T)$ then simply equals the number of times the Fitch operation had to use the union instead of the intersection \cite{Fitch,Hartigan1973}. Note that for a given tree $T$ with leaf set $X$ and any character $f$ on $X$ with $|f(X)|=r$, we have $l(f,T)\geq r-1$ (basically, this is due to the fact that the root can employ only one trait, and all other traits require at least one change). If we have $l(f,T)=r-1$, $f$ is said to be \emph{convex} or \emph{homoplasy-free} on $T$.

In particular, in this section, we are aiming at providing an algorithm which determines both the parsimony score and the \enquote{monophyly type} of a given character on a given tree (here, the monophyly type can be \emph{(non-relevant) monophyletic}, \emph{relevant monophyletic}, \emph{ (non-relevant) fully monophyletic}, \emph{relevant fully monophyletic} or \emph{not monophyletic}). 

However, before we state this algorithm, we will present some properties of fully monophyletic characters. We start with the following proposition.

\begin{proposition}\label{prop:convex} Let $f$ be a character on $X$ and let $T$ be a rooted binary phylogenetic $X$-tree. Then, if $f$ is fully monophyletic (relevant or not) with regards to $T$, it is also convex on $T$. 
\end{proposition}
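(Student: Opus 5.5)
Since every character $f$ with $|f(X)|=r$ satisfies $l(f,T)\geq r-1$, it suffices to construct an extension $g_f$ with $ch(g_f,T)=r-1$. The construction uses the antichain of monophyletic groups that $f$ induces.

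First, fix the antichain. If $f$ is fully monophyletic, let $A$ be the set of vertices that induce monophyly on $T$ for $f$. By Corollary~\ref{c:antichain}, $A$ is an antichain. Because every leaf lies in some monophyletic group, $L(A)=X$. The definition of inducing monophyly says that for $v\in A$ the trait on $L(v)$ occurs nowhere outside $L(v)$. Two distinct vertices of $A$ therefore carry distinct traits, and since the sets $L(v)$ partition $X$, every trait in $f(X)$ belongs to exactly one $v\in A$. Hence $|A|=r$. The relevant case needs no separate treatment, because a relevant fully monophyletic character is in particular fully monophyletic.

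Next, define the extension. For every vertex $u$ that is a descendant of some $v\in A$, set $g_f(u)$ to be the trait of $L(v)$; such a $v$ is unique because $A$ is an antichain. Every leaf is covered, so this agrees with $f$ on $X$. Every remaining vertex is a proper ancestor of some element of $A$, and these vertices form the set $U$. Give all of $U$ a single trait $a$. If $U$ is nonempty it contains the root, and we choose $a$ to be the trait of one fixed $v_0\in A$ whose parent lies in $U$. If $U$ is empty, then $A=\{\rho\}$, $r=1$, and the extension is constant with no changes.

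Finally, count the changed edges. Inside each subtree $T_v$ with $v\in A$ there are no changes. The set $U$ is closed under taking ancestors, so it is connected and contains the root, and there are no changes inside $U$. The only remaining edges are those joining some $v\in A$ to its parent in $U$, and there is exactly one such edge for each $v\in A$. Such an edge carries a change unless the trait of $v$ equals $a$, which happens exactly for $v=v_0$. This gives $ch(g_f)=|A|-1=r-1$, so $l(f,T)\le r-1$ and $f$ is convex.

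The main obstacle is justifying that each edge either lies inside some $T_v$, lies inside $U$, or joins some $v\in A$ to its parent in $U$. This requires $V(T)$ to be partitioned into $U$ and the subtrees $T_v$ for $v\in A$, which in turn requires every vertex to be either a descendant or an ancestor of an element of $A$. I would argue this from $L(A)=X$: take any vertex $u$ and a leaf $\ell$ below it, and let $v\in A$ be the element with $\ell\in L(v)$. Both $u$ and $v$ lie on the root-to-$\ell$ path, so they are comparable. If $u$ lies in $T_v$ it is handled by the first case, and otherwise $u$ is a proper ancestor of $v$ and so belongs to $U$.
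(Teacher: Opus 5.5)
Your proof is correct and follows essentially the same route as the paper. The paper also combines the lower bound $l(f,T)\geq |f(X)|-1$ with a construction: it collapses each monophyletic group $L(v)$ to a single leaf and labels every remaining inner vertex with one state from $f(X)$, which is exactly your extension $g_f$ with $U$ carrying the trait of $v_0$. Your version builds $g_f$ directly on $T$ and explicitly checks $|A|=r$ and the partition of $V(T)$ into $U$ and the subtrees $T_v$, so it needs no Fitch-set argument to see that no changes occur inside each $T_v$.
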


\begin{proof}
We need to show that $l(f,T)=|f(X)|-1$. By definition of a fully monophyletic character, every leaf $\ell \in X$ is a member of some monophyletic group (in the relevant case: of size at least 2, otherwise it may be a singleton leaf). The monophyletic group of a state $a$ has the property that there is a vertex $v$ in $V(T)$ which gets assigned set $\{a\}$ by the Fitch algorithm and that all of its descendants also get assigned set $\{a\}$. Moreover, $a$ does not occur in any other set outside of subtree $T_v$ induced by $v$ in the Fitch algorithm. This, however, shows that no unions are taken within $T_v$, so such subtrees do not induce any changes. Therefore, we can replace each monophyletic group of a state $a$ by a single leaf labelled $a$. This leads to a smaller tree $T'$ with $|f(X)|$ many leaves, all in different states. This shows that obviously we need at least $|f(X)|-1$ many changes; and we can reach this number precisely by choosing one state $a\in f(X)$ for all inner vertices of $T'$. This induces $|f(X)|-1$ changes on the edges leading to the non-$a$-leaves of $T'$ or, equivalently, to the monophyletic groups in $T$. This completes the proof. 
\end{proof}

\begin{figure}[htbp]
\centering

\begin{tikzpicture}[
    sdot/.style={circle,fill,radius=1pt,inner sep=1pt},
    thick
]

\tikzset{
  char/.style={
    font=\large,
    text height=1.5ex,
    text depth=0.25ex
  }
}

\node[sdot] (1) at (0,0) {}; 
\node[sdot] (2) at (2,0) {}; 
\node[sdot] (3) at (4,0) {}; 
\node[sdot] (4) at (6,0) {}; 
\node[sdot] (5) at (8,0) {}; 
\node[sdot] (6) at (10,0) {}; 

\node[sdot] (A) at (1,1) {}; 
\node[sdot] (B) at (5,1) {}; 
\node[sdot] (C) at (3,3) {}; 
\node[sdot] (D) at (4,4) {}; 
\node[sdot] (E) at (5,5) {}; 

\draw (1)--(A)--(2);
\draw (3)--(B)--(4);
\draw (A)--(C)--(B);
\draw (E)--(6);
\draw (D)--(5);
\draw (C)--(E)--(D);

\node[below=2pt of 1] {1};
\node[below=2pt of 2] {2};
\node[below=2pt of 3] {3};
\node[below=2pt of 4] {4};
\node[below=2pt of 5] {5};
\node[below=2pt of 6] {6};

\node[left=2pt of A] {$\{\gamma\}$}; 
\node[right=2pt of A] {\emph{mono}}; 
\node[left=2pt of B] {$\{\beta\}$};
\node[right=2pt of B] {\emph{mono}};
\node[left=2pt of C] {$\{\beta,\gamma\}$};
\node[right=2pt of C] {\emph{neutral}};
\node[below=2pt of C] {\LARGE $\star$};
\node[left=2pt of D] {$\{\alpha,\beta,\gamma\}$};
\node[right=2pt of D] {\emph{neutral}};
\node[below=2pt of D] {\LARGE $\star$};
\node[left=2pt of E] {$\{\alpha\}$};
\node[right=2pt of E] {\emph{neutral}};

\node[below=20pt of 1, char] (f) {$\gamma$};
\node[below=20pt of 2, char] {$\gamma$};
\node[below=20pt of 3, char] {$\beta$};
\node[below=20pt of 4, char] {$\beta$};
\node[below=20pt of 5, char] {$\alpha$};
\node[below=20pt of 6, char] {$\alpha$};
\node[left=10pt of f]{\large $f$:};
\end{tikzpicture}
\caption{Example of a relevant monophyletic, but not fully monophyletic, character $f$ on a rooted binary phylogenetic tree $T$ that is convex on $T$. The sets of character states assigned to the inner vertices are the Fitch sets (see also Lines~11 and~19 of Algorithm~\ref{alg:mono}), while the labels \emph{mono} and \emph{neutral} indicate the corresponding vertex status (see the while loop of Algorithm~\ref{alg:mono}). The two union vertices of the Fitch algorithm are marked with asterisks, implying that $l(f,T)=2=3-1=|f(X)|-1$, which shows that $f$ is convex on $T$, even though $f^{-1}(c)$ does not induce a monophyletic group. Indeed, after the execution of the while loop of Algorithm~\ref{alg:mono}, we obtain $m_\alpha=2$ and $m_\beta=m_\gamma=1$ (with $n_\alpha=n_\beta=n_\gamma=2$). Consequently, Line~28 applies, and the character is classified as relevant monophyletic.}
\label{fig_convex}
\end{figure}

Next, we observe that the reverse of Proposition \ref{prop:convex} is not always true: There exist convex characters which are not fully monophyletic. An example is given in Figure \ref{fig_convex}, where the root position is such that state $\alpha$ does not form a monophyletic group. However, note that the depicted character is relevant monophyletic due to the monophyletic group induced by state $\beta$ (and $\gamma$, respectively). 

However, the tree depicted in Figure \ref{fig_convex} could be re-rooted to make it relevant fully monophyletic: for instance, the root could be placed on the edge leading to cherry $[1,2]$. But there are examples of unrooted trees together with a convex character $f$ for which no edge provides a proper root position to make the character fully monophyletic. An example is depicted in Figure \ref{fig_convex2}. 

\begin{figure}[htbp]
\centering
\begin{tikzpicture}[
    sdot/.style={circle,fill,radius=1pt,inner sep=1pt},
    thick
]

\tikzset{
  char/.style={
    font=\large,
    text height=1.5ex,
    text depth=0.25ex
  }
}

\node[sdot] (1) at (1,3) {}; 
\node[sdot] (2) at (1,1) {}; 
\node[sdot] (3) at (3,1) {}; 
\node[sdot] (4) at (4,1) {}; 
\node[sdot] (5) at (6,3) {}; 
\node[sdot] (6) at (6,1) {}; 

\node[sdot] (A) at (2,2) {}; 
\node[sdot] (B) at (3,2) {}; 
\node[sdot] (C) at (4,2) {}; 
\node[sdot] (D) at (5,2) {}; 

\draw (1)--(A)--(2);
\draw (A)--(B)--(3);
\draw (B)--(C)--(4);
\draw (C)--(D);
\draw (5)--(D)--(6);

\draw[line width=2pt] (A)--(B);
\draw[line width=2pt] (C)--(D);

\node[left=2pt of 1] {1};
\node[left=2pt of 2] {2};
\node[below=2pt of 3] {3};
\node[below=2pt of 4] {4};
\node[right=2pt of 5] {5};
\node[right=2pt of 6] {6};

\node[left=15pt of 1] {$\alpha$};
\node[left=15pt of 2] {$\alpha$};
\node[below=15pt of 3] {$\beta$};
\node[below=15pt of 4] {$\beta$};
\node[right=15pt of 5] {$\gamma$};
\node[right=15pt of 6] {$\gamma$};
\end{tikzpicture}
\caption{Example of an unrooted tree $T$ together with a convex character $f$, which has no edge on which the root could be placed to make $f$ fully monophyletic. Note that the highlighted edges are the ones showing that $l(f,T)=2$, which together with $|f(X)|=3$ shows that $f$ is indeed convex on $T$.}\label{fig_convex2}
\end{figure}

\par\vspace{0.5cm}

Next, we consider \emph{forbidden} vertices of $T$ with respect to a character $f$. Consider an inner vertex $u$ with children $v$ and $w$. Let $S(u)$, $S(v)$, and $S(w)$ denote the sets assigned to $u$, $v$, and $w$, respectively, by the Fitch algorithm. Then, we call $u$ \emph{forbidden} if $S(v)$ or $S(w)$ (or both) are unions and if $S(u)\subseteq S(v)$ or $S(u)\subseteq S(w)$. In Figure \ref{fig_convex}, the root is a forbidden vertex, because one of its children is a union vertex (highlighted with an asterisk), and the set assigned to the root is contained in said union.

\begin{proposition}\label{prop:forbidden} Let $f$ be a character on $X$ and let $T$ be a rooted binary phylogenetic $X$-tree. Then, if there are no forbidden vertices of $T$ with respect to $f$, $f$ is fully monophyletic. If additionally we have $|f^{-1}(a)|\geq 2$ for all $a \in f(X)$, then $f$ is relevant fully monophyletic.
\end{proposition}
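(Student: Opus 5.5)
The plan is to show that, when no vertex is forbidden, the Fitch sets split $T$ into an upper part of union vertices and a collection of monochromatic pendant subtrees hanging below it, one for each state. Each of these pendant subtrees then gives the monophyletic group for its state. Throughout, a vertex is a \emph{union vertex} if Fitch's operation takes the union there and an \emph{intersection vertex} otherwise. Leaves are treated as intersection vertices with singleton sets.

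\emph{Step 1: ancestors of union vertices are union vertices.} Let $v$ be a union vertex with parent $u$ and sibling $w$. If $u$ were an intersection vertex, then $S(u)=S(v)\cap S(w)\subseteq S(v)$ with $S(v)$ a union, so $u$ would be forbidden. Hence $u$ is a union vertex. By induction, the set $\mathcal U$ of union vertices is closed under taking ancestors. Every vertex outside $\mathcal U$ therefore has only intersection vertices (and leaves) below it.

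\emph{Step 2: intersection-only subtrees are monochromatic.} By induction from the leaves, if every vertex of $T_v$ is an intersection vertex, then $S(v)=\{a_v\}$ is a singleton and every leaf of $T_v$ has state $a_v$. The inductive step is that two singletons with nonempty intersection must be equal. Now let $M$ be the set of maximal vertices not in $\mathcal U$, i.e.\ non-union vertices whose parent is a union vertex, or the root if $\mathcal U=\emptyset$. By Step 1, the sets $L(v)$ for $v\in M$ partition $X$, and each $L(v)$ is monochromatic with state $a_v$.

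\emph{Step 3: the states $a_v$ are pairwise distinct.} Argue by induction upward through $\mathcal U$. For each vertex $u$, $S(u)$ is the disjoint union of the $\{a_v\}$ over those $v\in M$ lying below $u$, and these $a_v$ are pairwise distinct. At a union vertex the two children have disjoint sets, precisely because Fitch's operation took the union. So the states contributed from the two sides are distinct from each other and are simply combined. Applying this at the root shows that $v\mapsto a_v$ is injective on $M$. Consequently $f^{-1}(a_v)=L(v)$ for every $v\in M$, so each $v\in M$ induces monophyly. Since the $L(v)$ cover $X$, $f$ is fully monophyletic. In the degenerate case $\mathcal U=\emptyset$, $M$ consists of the root alone and $f$ is constant.

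\emph{Relevance.} Under the extra hypothesis $|f^{-1}(a)|\ge 2$ for every state $a$, we get $|L(v)|=|f^{-1}(a_v)|\ge 2$ for each $v\in M$. So every group is relevant and $f$ is relevant fully monophyletic.

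The main obstacle is Step 1: recognising that the no-forbidden condition exactly rules out an intersection vertex sitting above a union vertex. Once that is established, Steps 2 and 3 are routine inductions on Fitch sets.
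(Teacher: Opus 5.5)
Your proof is correct, and it takes a different route from the paper's.

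\textbf{The paper's route.} The paper argues by contradiction and locally. It fixes a state $a$ that is not monophyletic and takes the lowest common ancestor $w$ of the $a$-leaves. In each maximal pendant subtree of $T_w$, it follows the Fitch sets upward from a first union vertex containing $a$. Non-forbiddenness forces every vertex on these paths to be a union, so $a$ survives up to the children of $w$. Then $w$ is an intersection vertex whose set lies inside a union child's set, i.e.\ $w$ is forbidden.

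\textbf{Your route.} You prove the statement directly through a global decomposition. Union vertices are closed under taking ancestors. The maximal non-union vertices $M$ therefore form an antichain of monochromatic pendant subtrees that partition $X$. Disjointness of the children's sets at union vertices then makes $v\mapsto a_v$ injective.

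\textbf{What each buys.} The paper's argument is shorter and produces an explicit certificate: the LCA of an offending state is forbidden. This fits the algorithmic theme of that section.

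Your argument is more careful and gives more information.
\begin{enumerate}
\item It does not rely on both children of $w$ being union vertices. The paper's wording tacitly assumes this, and it fails when one side of $w$ is entirely in state $a$ (e.g.\ a single $a$-leaf), even though $w$ is still forbidden there.
\item It explicitly exhibits the covering antichain $M$ of Corollary~\ref{c:antichain}.
\item It shows that the Fitch set of every union vertex is exactly the set of states of the groups below it.
\item Since $\mathcal U$ is the set of inner vertices of a binary tree with leaf set $M$, it gives $l(f,T)=|\mathcal U|=|M|-1=|f(X)|-1$. This recovers the convexity statement of Proposition~\ref{prop:convex} at no extra cost.
\end{enumerate}
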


\begin{proof} We start with the first statement and, seeking a contradiction, assume that it is not true. That is, we assume that there exists a rooted binary phylogenetic $X$-tree $T$ and a character $f$ on $X$ such that there is no forbidden vertex, but $f$ is not fully monophyletic. This must imply that there is a state $a\in f(X)$ that does not form a monophyletic group -- in particular, we must have $|f^{-1}(a)|\geq 2$. 

Let $w$ denote the lowest common ancestor of all $a$-vertices in $T$ and consider $T_w$. Then, by construction $T_w$ must have at least one $a$-leaf in each of its maximum pendant subtrees $T_w=(T_w^1,T_w^2)$.

Now consider an arbitrary run of the Fitch algorithm and fix the order in which the inner vertices of $T_w^i$ are considered (for each $i\in \{1,2\}$). Then, as $a$ does not form a monophyletic group in $T$, there must be a first inner vertex $u_i$ in $T_w^i$ which gets assigned $S(u_i)$ with $a\in S(u_i)$ and $S(u_i)$ being a union vertex (as an example, consider state $\alpha$
and the parent of leaf $5$ in Figure \ref{fig_convex}). Now, as $u$ is not a forbidden vertex by assumption, the parent of $u$ in $T_w^i$ has to be another union vertex (consisting of a strict superset of $S(u_i)$). For the same reasons, the same must be true for all ancestors of $u_i$ -- so the path from $u_i$ to the root of $T_w^i$ is such that the sets $S$ assigned by the Fitch algorithm are all unions and they get strictly larger in each step. 

Now consider $w$: Its children are the roots of $T_w^1$ and $T_w^2$, which are assigned union sets containing $a$ by construction. Thus, $w$ will be assigned an intersection containing $a$, thus $S(w)$ is a subset of both of the $S$-sets of its children. Thus, $w$ is a forbidden vertex, contradicting our assumption.

This shows that if $f$ is not fully monophyletic, the Fitch algorithm will find a forbidden vertex. The second statement then follows by the definition of relevant fully monophyletic characters. This completes the proof.
\end{proof}

We end this section with presenting Algorithm \ref{alg:mono} which can be regarded as an extension of the Fitch algorithm. In particular, it traverses all vertices and all states employed by the character (which is at most $n$) only once and thus works in time $\mathcal{O}(n)$, and it returns both the parsimony score and the monophyly type of the character on the tree (\emph{(non-relevant) monophyletic}, \emph{relevant monophyletic}, \emph{(non-relevant) fully monophyletic}, \emph{relevant fully monophyletic} or \emph{not monophyletic}). Recall that $\mathring{V}(T)$ refers to the set of internal vertices of $T$.

\begin{algorithm}
\caption{Detecting monophyly types}\label{alg:mono}
\KwInput{rooted binary phylogenetic $X$-tree $T$ and character $f$ on $X$ with $|X|=n$ and $f(X)=\{c_1,\ldots,c_m\}$ for some $m\leq n$.}
\KwOutput{$l(f,T)$ and monophyly type of $f$ on $T$}
$n_{c_i} \gets |\{x \in X: f(x)=c_i\}| \ \forall i=1,\ldots,m$\; 

$m_{c_i} \gets n_{c_i} \ \forall i=1,\ldots,m$\;

$status(x) \gets mono \ \forall x \in X$\;
$status(v) \gets unvisited \ \forall v \in \mathring{V}(T)$\;
$S(x)=\{f(x)\} \ \forall x \in X$\;
$S(v)=\emptyset \ \forall v \in \mathring{V}(T)$\;

$l(f,T) \gets 0$\;
$output \gets \emptyset$

\While{$\exists v\in V(T)$ with $status(v)=unvisited$}{
Let $v$ be such that $status(v)=unvisited$ and with children $w_1$ and $w_2$ with $status(w_1)\neq unvisited$ and $status(w_2)\neq unvisited$\;

  \eIf{$S(w_1)\cap S(w_2)\neq \emptyset$}{
    $S(v) \gets S(w_1)\cap S(w_2)$\;
    \eIf{$status(w_1)=status(w_2)=mono$}{
    $status(v)\gets mono$\;
    $m_{a}\gets m_a-1$ \ for $a \in S(v)$\; 
  }{$status(v)\gets neutral$\;}
  
  }{$S(v)\gets S(w_1)\cup S(w_2)$\;
    $l(f,T)\gets l(f,T)+1$\;
    $status(v) \gets neutral$;
  }
}
\eIf{$m_{c_i}=1 \ \forall i=1,\ldots,m$}{

\eIf{$n_{c_i}\geq 2 \ \forall i=1,\ldots,m$}{
$output \gets \mbox{relevant fully monophyletic}$\;}{$output \gets \mbox{fully monophyletic (non-relevant)}$\;}}
{\eIf{$\exists i\in \{1,\ldots,m\}$ with $m_{c_i}=1$ and $n_{c_i}\geq 2$}{
$output \gets \mbox{relevant  monophyletic}$\;}{\eIf{$\exists i\in \{1,\ldots,m\}$ with $m_{c_i}=1$}{$output \gets \mbox{monophyletic (non-relevant)}$\;}{$output \gets \mbox{not monophyletic}$\;}}
}
\Return{$l(f,T)$ \mbox{and} $output$.}
\end{algorithm}

We now argue why Algorithm \ref{alg:mono} is correct. First, lines 9--23 describe the traversal through all inner vertices which determines the parsimony score based on unions and intersections of the $S(v)$ sets just like the well-known Fitch algorithm. However, additionally this traversal also calculates $m_a$ for each character state $a \in f(X)$, where $m_a$ ultimately will determine if $a$ forms a monophyletic group in $T$. This is done as follows: In the beginning, each leaf gets assigned the status \emph{mono}, whereas all inner vertices have the status \emph{unvisited}. Once a vertex has been visited, the \emph{unvisited} status will change to either \emph{mono} or \emph{neutral}, it never goes back to \emph{unvisited}. At the same time, the counter $m_a$, which at the initialisation step in line 2 was assigned the number of times state $a$ occurs in $f(X)$, gets decreased precisely if two monophyletic vertices have a non-empty $S$-intersection. It is crucial to note that this can only happen if the two $S$-sets have size 1 and are identical. In this case, $m_a$ gets decreased by one and the currently considered vertex gets the status \emph{mono}. Otherwise, $m_a$ remains unchanged and the vertex gets status \emph{neutral} to indicate that it is not monophyletic but has been visited (the latter is important for the while-loop's stopping criterion). 

So when the while loop is finished, crucially $m_a$ contains the information whether $a$ is a monophyletic state (if $m_a=1$), because $m_a$ describes the number of subtrees only employing state $a$ (i.e., potential monophyletic groups in state $a$). Thus, lines 22--34 merely need to sort the information for all $a\in f(X)$ accordingly in order to determine the monophyly state of the character in the variable called \emph{output}. In the end, both $l(f,T)$ and \emph{output} are returned.

Note that the Fitch algorithm does not require the traversal through all states of $f(X)$ described in lines 22--34. In fact, $l(f,T)$ could be returned directly after the while-loop, i.e., in line 22. Thus, our algorithm requires $|f(X)|$  more steps than the well-known Fitch algorithm -- but as $|f(X)|\leq n$, the entire algorithm's runtime is still in $\mathcal{O}(n)$.

\smallskip
An example is shown in Figure~\ref{fig_convex}. In this example, $n_\alpha=n_\beta=n_\gamma=2$, since each of the states $\alpha$, $\beta$, and $\gamma$ occurs twice. Running the algorithm assigns a Fitch set and a status to every inner vertex of the tree $T$, as illustrated in the figure (for simplicity, Fitch sets and statuses are not shown for the leaves). For the vertices marked with asterisks, Lines~19--20 of the algorithm are executed, yielding $l(f,T)=2$. In addition, for states $\beta$ and $\gamma$, Line~15 of Algorithm~\ref{alg:mono} is executed once each, so that $m_\beta=m_\gamma=1$, while $m_\alpha=2$. Consequently, Line~28 of the algorithm applies, and the character $f$ is classified as relevant monophyletic.

\section{Discussion and outlook}

Monophyly can be treated as a combinatorial property of characters, with multiple distinct notions of what it means to be monophyletic. These notions admit exact enumeration and expectation calculations and connect naturally to convexity and parsimony. 

In particular, we have introduced four types of monophyletic characters, and provided general formulae for counting the number of monophyletic characters for two of them. We have directly calculated the monophyly number for several special cases, and provided the expected number of monophyletic binary characters for trees with $n$ leaves for each of them. Finally, we linked the concept of monophyly with the well-known theories of convex characters and parsimony scores, and provided an algorithm for calculating parsimony score and monophyly type in $\mathcal{O}(n)$ time.

The existence of several inequivalent notions of monophyly, with different properties, highlights that monophyly according to our notions is not a single combinatorial phenomenon. Rather, different notions capture different structural relationships between character states and tree topology, and may therefore be appropriate in different biological or statistical settings.

Our results show that the monophyly number depends strongly on tree shape, and in particular indicate that monophyly number may provide a novel quantitative descriptor of tree shape. 

Surprisingly, our results suggest that classical notions of tree balance may not fully explain variation in monophyly number. For example, in the binary case, the presence or absence of a leaf adjacent to the root completely determines the extremal values of the monophyly number, regardless of the remaining internal structure. This indicates that local structural features may have a stronger influence than global balance measures. Nevertheless, it remains natural to ask whether familiar tree families such as caterpillars and fully balanced trees exhibit extremal behaviour under our notions of monophyly.

One of the main remaining theoretical questions of course, is whether it is possible to find a characterisation of full monophyly numbers in the spirit of Theorem \ref{thm:gen.theorem}. This seems deeply related to determining the existence of a set of (up to) $c$ internal vertices $\{v_1,\dots,v_c\}$ so that $L(v_1),\dots,L(v_c)$ partitions the set of leaves.

Finally, we believe that -- based on their expected values under different models, as discussed in Section \ref{s:expmon} -- each of the monophyly numbers have potential to form the basis for statistical testing, in which one can observe the prevalence of monophyly for (sets of) trees compared to the base expected level of monophyly. An interesting avenue for future research would thus be to investigate the expected values of these numbers under more complex models. 

\section*{Acknowledgements} 
The authors wish to thank Falk Nagies for providing the initial inspiration for the project. The authors additionally wish to thank Andrew Francis for helpful discussions and insights during the project. 
Parts of this material are based upon work supported by the National Science Foundation under Grant No. DMS-1929284 while MF and KW were in residence at the Institute for Computational and Experimental Research in Mathematics in Providence, RI, during the Theory, Methods, and Applications of Quantitative Phylogenomics semester program. MH's research was supported by the Australian Government through the Australian Research Council's Discovery Projects funding scheme (project DP260102678). The views expressed herein are those of the authors and are not necessarily those of the Australian Government or Australian Research Council.

\section*{Conflict of interest} The authors herewith certify that they have no affiliations with or involvement in any
organisation or entity with any financial (such as honoraria; educational grants; participation in speakers’ bureaus;
membership, employment, consultancies, stock ownership, or other equity interest; and expert testimony or patent-licensing
arrangements) or non-financial (such as personal or professional relationships, affiliations, knowledge or beliefs) interest in the subject matter discussed in this manuscript.

\section*{Data availability statement} 
Data sharing is not applicable to this article as no new data were created or analysed in this study.

\bibliographystyle{plain}
\bibliography{monobib}   

\end{document}